\newcommand{\be}{\begin{equation}}
\newcommand{\ee}{\end{equation}}
\newcommand{\ba}{\begin{array}}
\newcommand{\ea}{\end{array}}
\newcommand{\bea}{\begin{eqnarray}}
\newcommand{\eea}{\end{eqnarray}}
\newcommand{\trace}[1]{{\mathrm{Tr}{#1}}}
\newtheorem{prop}{Proposition}
\newtheorem{lemma}{Lemma}
\newtheorem{corol}{Corollary}
\newtheorem{theorem}{Theorem}
\newtheorem*{theorem*}{Theorem}
\begin{document}
\title{Polynomial-time classical simulation of quantum ferromagnets}
\author{Sergey Bravyi}
\author{David Gosset}
\affiliation{IBM T.J. Watson Research Center}
\begin{abstract}We consider a family of quantum spin systems which includes as special cases the ferromagnetic XY model and ferromagnetic Ising model on any graph, with or without a transverse magnetic field. We prove that the partition function of any model in this family can be efficiently approximated to a given relative error $\epsilon$ using a classical randomized algorithm with runtime polynomial in $\epsilon^{-1}$, system size, and inverse temperature. As a consequence we obtain a polynomial time algorithm which approximates the free energy or ground energy to a given additive error. We first show how to approximate the partition function by the perfect matching sum of a finite graph with positive edge weights. Although the perfect matching sum is not known to be efficiently approximable in general, the graphs obtained by our method have a special structure which facilitates efficient approximation via a randomized algorithm due to Jerrum and Sinclair. 
\end{abstract}
\maketitle

%%%%%%%%%%%%%%%%%%%%%%%%%%%%%%%%%%%%%%%%%%%%%
%%%%%%%%%%%%%%%%%%%%%%%%%%%%%%%%%%%%%%%%%%%%%%
Quantum Monte Carlo is an umbrella term which refers to a powerful suite of classical probabilistic simulation algorithms for quantum many-body systems.  These algorithms can be used to compute the thermal or ground state properties of a quantum system described by a \textit{stoquastic} (sign-problem free) Hamiltonian, defined by the property that all off-diagonal matrix elements are real and nonpositive.  
In practice quantum Monte Carlo methods can be used to simulate systems which are orders of magnitude larger than is possible using exact diagonalization~\cite{sandvikreview}. This is because they are based on a probabilistic representation of the Gibbs density matrix which typically uses substantially less computer memory than an explicit representation.

Given the empirical success of quantum Monte Carlo, one may ask if stoquasticity makes classical simulation easier in a formal complexity-theoretic sense. This and related questions have been studied in Refs.~\cite{stoq06,CM16,BH16, BT09,H13,CrossonHarrow}. For our purposes suffice it to say that stoquasticity alone is not enough to guarantee efficient simulation. Indeed, it is well known that approximating the ground energy is intractable even for the special case of classical spin Hamiltonians such as the Ising model \cite{barahona}.

Can we identify physically motivated families of stoquastic Hamiltonians for which efficient simulation \textit{is} possible? In a landmark result Jerrum and Sinclair established that a broad family of classical Ising models characterized by ferromagnetic interactions can be efficiently simulated \cite{JSising} (see also \cite{GJ15}). A recent extension efficiently simulates the ferromagnetic tranverse field Ising model \cite{B14}, a system described by a fully quantum (i.e., non-diagonal) stoquastic Hamiltonian. This result can be viewed as a provably efficient quantum Monte Carlo algorithm.

In this paper we consider a more general family of ferromagnetic quantum spin systems described by $n$-qubit Hamiltonians of the form
\begin{equation}
H=\sum_{1\leq i<j\leq n} \left(-b_{ij} X_i X_j+c_{ij} Y_i Y_j\right)+\sum_{i=1}^{n} d_i (I+Z_i).
\label{eq:ham}
\end{equation}
Here $b_{ij},c_{ij},d_i\in \mathbb{R}$ are some coefficients, and $X_i,Y_i,Z_i$ are Pauli operators acting on the $i$th qubit. We restrict our attention to ferromagnetic interactions, defined as:
\begin{flalign}
& \quad \textbf{Ferromagnetic}: \qquad |c_{ij}|\leq b_{ij}.&
\label{eq:ferro}
\end{flalign}
Eq.~\eqref{eq:ferro} ensures that $H$ is stoquastic. We also assume $b_{ij},|c_{ij}|,|d_i|\in [0,1]$, which can always be achieved by rescaling the Hamiltonian. 

We describe a polynomial-time classical approximation algorithm for the partition function
\begin{align}
\mathcal{Z}(\beta,H)&\equiv\mathrm{Tr}\left[e^{-\beta H}\right] \qquad \quad \beta>0,
\end{align}
which enables an efficient computation of the free energy and ground energy of these models, in a precise sense detailed below. We obtain efficient simulations of well-known models of ferromagnetism such as the XY model (setting $c_{ij}=-b_{ij})$, the transverse Ising model~\cite{B14} (setting $c_{ij}=0$), as well as a continuum of systems in between.  Our work further resolves the boundary between easy- and hard-to-simulate systems, a topic of central interest in quantum Hamiltonian complexity \cite{QHCreview}. As an example, we note that our results settle the complexity of a one-parameter family of local Hamiltonian problems studied by Piddock and Montanaro (see Fig. 1 of Ref.~\cite{PM15}). Our algorithm, which is provably efficient yet far from practical, complements existing quantum Monte Carlo methods which may be practical but have no performance guarantees \cite{sandvikreview}.

We say that $f\in \mathbb{R}$ approximates $F\in \mathbb{R}$ within relative error $\epsilon$ if $(1-\epsilon)F\leq f\leq (1+\epsilon)F$. A \textit{randomized approximation scheme} for a real-valued function $F$ with domain $\mathcal{D}$ is a classical probabilistic algorithm which takes as input $x\in \mathcal{D}$ and $0<\epsilon<1$ and outputs an estimate $f(x)$ which, with probability at least $3/4$, approximates $F(x)$ within relative error $\epsilon$. Here the success probability $3/4$ can always be amplified to $1-\delta$ by taking the median of  $O(\log(\delta^{-1}))$ estimates produced by independent runs of the algorithm \cite{JVV86}. 

We now state our main result.
\begin{theorem}
The partition function $\mathcal{Z}(\beta,H)$ admits a randomized approximation scheme with runtime upper bounded as $\mathrm{poly}(n,\beta,\epsilon^{-1})$.
\label{thm:main}
\end{theorem}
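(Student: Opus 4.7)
My plan is to reduce the approximation of $\mathcal Z(\beta,H)$ to that of a weighted perfect matching sum on an auxiliary graph of polynomial size with non-negative edge weights, and then to invoke the Jerrum--Sinclair randomized approximation. There are three ingredients: (i) a manifestly positive path-integral representation of $\mathcal Z$; (ii) an identification of the resulting combinatorial sum with the weighted perfect matching polynomial of a graph $G$; and (iii) verification that $G$ lies in a class for which the Jerrum--Sinclair FPRAS applies.

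For (i), the ferromagnetic condition $|c_{ij}|\le b_{ij}$ lets me rewrite each two-qubit term as a non-negative combination of hopping and pairing operators,
\[
-b_{ij}X_iX_j+c_{ij}Y_iY_j=-(b_{ij}+c_{ij})(S_i^+S_j^-+S_i^-S_j^+)-(b_{ij}-c_{ij})(S_i^+S_j^++S_i^-S_j^-),
\]
in which both prefactors are non-negative. Expanding $e^{-\beta H}$ as a Dyson series in the interaction picture around the diagonal part $D=\sum_i d_i(I+Z_i)$ then writes $\mathcal Z$ as a sum over sequences of ``events'' (hoppings or pair creations/annihilations) inserted at imaginary times $0<t_1<\cdots<t_m<\beta$, with weight a product of the coefficients $b_{ij}\pm c_{ij}\ge 0$ and diagonal Boltzmann factors $e^{-\Delta t\,D(\sigma)}\ge 0$ along the piecewise-constant world-line configuration. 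Since the off-diagonal part $V$ of $H$ satisfies $\|V\|=\mathcal{O}(\mathrm{poly}(n))$, a Poisson-tail estimate shows that truncating the Dyson sum at event count $m\le M=\mathrm{poly}(n,\beta,\log(1/\epsilon))$ keeps the truncation error within a $(1\pm\epsilon)$ factor of $\mathcal Z$ once combined with a simple a priori lower bound on $\mathcal Z$.

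For (ii), each event pairs two distinct sites, contributing one raising and one lowering half-action, and the cyclic trace forces every site to participate in equal numbers of raisings and lowerings along the time circle $[0,\beta]$. The natural combinatorial object indexing the admissible configurations is a perfect matching on a graph $G$ whose vertices are event slots (half-edges at each site/time) and whose edges encode which pairings assemble into a valid closed world-line trajectory. The resulting edge weights are products of $(b_{ij}\pm c_{ij})$ and time-integrated diagonal factors, all non-negative by construction.

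Step (iii) is the crux and the main obstacle. Exact perfect matching counting is $\#\mathrm{P}$-hard in general, and the Jerrum--Sinclair FPRAS requires specific structural properties of $G$, typically bipartiteness together with a polynomially bounded ratio between competing matching weights. I would show that the slice-by-slice construction makes $G$ bipartite with polynomially bounded degree and weights controlled by $b_{ij},|c_{ij}|,|d_i|\in[0,1]$, so that their algorithm applies and returns an $\epsilon$-relative estimate of the matching polynomial---and hence of $\mathcal Z(\beta,H)$---in time $\mathrm{poly}(n,\beta,\epsilon^{-1})$. The ferromagnetic inequality must enter twice in this step: once to guarantee positivity of the weights, and once to control the weight-ratio condition required by the chain; stoquasticity alone is insufficient, since approximating partition functions of closely related antiferromagnetic classical spin systems is already intractable.
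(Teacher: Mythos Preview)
Your three-step architecture (positive expansion $\to$ perfect matching sum $\to$ Jerrum--Sinclair) matches the paper's, and your step~(i) via the Dyson/stochastic-series expansion is a reasonable alternative to the paper's Trotter--Suzuki discretization; either produces a non-negative product form. Step~(ii) is too vague as written---the paper uses explicit matchgate gadgets implementing the elementary gates $f(t),g(t),h(t)$ and composes them circuit-style---but some such construction can be made to work.

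The genuine gap is step~(iii). You assert that the graph will be bipartite and that this, together with bounded degree and bounded edge weights, suffices for the FPRAS. Neither claim is justified. The paper's graph $\Gamma$ is \emph{not} shown to be bipartite, and the bipartite permanent algorithm (JSV) is not invoked. Instead, the paper uses the older Jerrum--Sinclair algorithm for general graphs, which requires the ratio
\[
\frac{\mathrm{NearPerfMatch}(\Gamma)}{\mathrm{PerfMatch}(\Gamma)}\le q(|V|)
\]
to be polynomially bounded. Establishing this bound is the heart of the proof and is where the ferromagnetic structure is actually used beyond positivity. The argument is not a structural graph property at all: one shows that removing two vertices $u,v$ from $\Gamma$ corresponds, at the level of gates, to inserting two Pauli-type operators into the trace, so that $\Omega_{u,v}(\Gamma)\le \mathrm{Tr}(BX_bAX_a)$ where $A,B$ are contiguous partial products of the Trotter sequence. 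One then uses the von~Neumann-type identity $\max_{U,V}|\mathrm{Tr}(BUAV)|=\sum_k s_k(A)s_k(B)$ together with carefully controlled \emph{partial} Trotter approximations (this is why the paper's Lemma~\ref{lem:pathint} bounds not just the full product but every subproduct $G_j\cdots G_i$) to conclude $\sum_k s_k(A)s_k(B)\le O(1)\,\mathcal Z(\beta,H)$. This spectral/correlation-function argument is the missing idea in your proposal; ``bipartite with bounded degree and weights'' does not substitute for it, and there is no reason to expect the world-line graph you sketch to be bipartite in the first place.
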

A direct application of Theorem \ref{thm:main} provides an estimate of the free energy $\mathcal{F}(\beta)=-\beta^{-1}\log(\mathcal{Z}(\beta,H))$ which, with high probability, achieves a given absolute error $\Delta$ in time $\mathrm{poly}(n,\beta,\Delta^{-1})$. Choosing $\beta=O(n\Delta^{-1})$ is sufficient  to ensure that the free energy approximates the ground energy to within absolute error $\Delta$ \footnote{Changing variables to temperature $T=\beta^{-1}$ , the ground energy is $\mathcal{F}(T=0)=E_0$ and $\frac{d\mathcal{F}}{dT}=-S(T)$ where $S(T)$ is the von Neumann entropy of the Gibbs density matrix. Since $0\leq S(T)\leq n$ we have $0\leq E_0-\mathcal{F}(T)\leq n T$.}.

To prove Theorem \ref{thm:main} we first approximate the partition function by the perfect matching sum of a weighted graph, which is then approximated using an algorithm from Ref.~\cite{JS89}.

We begin by reviewing relevant background information concerning matchings in graphs. We consider finite weighted graphs $\Gamma=(V,E,w)$ with vertex set $V$, edge set $E$ and positive edge weights, i.e., $w:E\rightarrow \mathbb{R}_{>0}$. A matching of $\Gamma$ is a set of edges $M\subseteq E$ such that no two edges in $M$ share a vertex. A perfect matching has the additional property that each vertex $v\in E$ is covered by exactly one edge of $M$. Equivalently, a matching is perfect if it contains exactly $|V|/2$ edges.  A nearly perfect matching is a matching with exactly $|V|/2-1$ edges.

Suppose $\Gamma$ has an even number $|V|=2N$ of vertices.  Let $M_{k}(\Gamma)$ be the set of all matchings of $\Gamma$ which contain exactly $k$ edges. Define the perfect matching sum
\begin{equation}
\mathrm{PerfMatch}(\Gamma)=\sum_{M\in M_N(\Gamma)} \prod_{e\in M} w(e) .
\label{eq:pmatch}
\end{equation}
Similarly define $\mathrm{NearPerfMatch}(\Gamma)$ to be Eq.~\eqref{eq:pmatch} with $M_N(\Gamma)$ replaced by $M_{N-1}(\Gamma)$.

The problem of computing $\mathrm{PerfMatch}(\Gamma)$ for a graph with positive edge weights has been extensively studied. If $\Gamma$ is planar it can be computed efficiently using the so-called FKT algorithm \cite{Fisher1961,Kasteleyn2009,Temperley1961}.  In general it is $\#$P-hard to compute exactly, even for bipartite graphs \cite{valiant79}. Nevertheless, for bipartite graphs it admits a randomized approximation scheme with runtime upper bounded as a polynomial in the size of the graph $|V|$ and the desired relative error $\epsilon^{-1}$ \cite{JSV04}. For general graphs (which may be neither planar nor bipartite), a randomized approximation scheme from Ref.~\cite{JS89} has polynomial runtime under the additional condition that $\mathrm{NearPerfMatch}(\Gamma)$ exceeds $\mathrm{PerfMatch}(\Gamma)$ by at most a polynomial factor. To state the result precisely, let $w_{max}=\max\left\{1,\max_{e\in E} w(e)\right\}$ and  $w_{min}=\min\left\{1, \min_{e\in E} w(e)\right\}$.

\begin{restatable}[Jerrum and Sinclair \cite{JS89}]{theorem}{hafnian}
Let $q$ be a fixed polynomial and let $\Gamma=(V,E,w)$ be a graph with positive edge weights such that $\mathrm{PerfMatch}(\Gamma)\neq 0$ and 
\begin{equation}
\frac{\mathrm{NearPerfMatch}(\Gamma)}{\mathrm{PerfMatch}(\Gamma)}\leq q(|V|).
\label{eq:pmcond}
\end{equation}
Then $\mathrm{PerfMatch}(\Gamma)$ admits a randomized approximation scheme with runtime bounded as $\tilde{O}\left(\epsilon^{-2}|V|^6|E|^5w_{max}^6 q(|V|)^6\right)$.
\label{thm:countpm}
\end{restatable}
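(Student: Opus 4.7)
The plan is to apply the Markov chain Monte Carlo (MCMC) paradigm: construct a Markov chain on matchings of $\Gamma$ whose stationary distribution places substantial weight on perfect matchings, show that it mixes rapidly under the hypothesis \eqref{eq:pmcond}, and then bootstrap an approximate sampler into an approximate counter via self-reducibility.

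First I would take the state space to be $M_{N-1}(\Gamma)\cup M_N(\Gamma)$, with a stationary distribution $\pi$ in which a matching $M$ has probability proportional to $\prod_{e\in M} w(e)$, up to a size-dependent reweighting factor chosen so that the $\pi$-mass on the two layers is comparable whenever \eqref{eq:pmcond} holds. Transitions would be local \emph{add}, \emph{remove}, and \emph{swap} moves relating a near-perfect and a perfect matching that differ by a single edge, accepted with Metropolis--Hastings probabilities tuned to $w_{max}$ so that detailed balance with $\pi$ holds.

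Next, I would bound the mixing time by the method of canonical paths. For every ordered pair $(I,F)$ of states I would define a canonical path through the state space by inspecting the symmetric difference $I\triangle F$, whose components are alternating paths and even cycles, and resolving these components one at a time via a prescribed sequence of the local moves above. A congestion argument then bounds, for each transition, the total $\pi$-weight of ordered pairs whose canonical path uses that transition by the $\pi$-weight of the transition times a factor polynomial in $|V|$, $w_{max}$ and $q(|V|)$. Feeding this congestion bound into the Poincar\'e / conductance inequality yields a mixing time polynomial in the same parameters, producing an approximate $\pi$-sampler.

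Finally, I would convert the sampler into an approximate counter via self-reducibility: write $\mathrm{PerfMatch}(\Gamma)$ as a telescoping product of $O(|E|)$ ratios, each of which is the $\pi$-probability of an efficiently detectable event in a slightly modified graph (for example, after continuously scaling one edge weight between $0$ and its actual value). Estimating each ratio by Monte Carlo averaging to relative error $\epsilon/|E|$, with standard Chernoff-type concentration bounds, and multiplying the per-sample mixing cost by the number of samples, gives the overall $\tilde O(\epsilon^{-2}|V|^6|E|^5 w_{max}^6 q(|V|)^6)$ runtime. The main obstacle is the canonical-path congestion bound: one must design an injective encoding of ordered pairs routed through any fixed transition so that their combined weight does not blow up beyond a factor polynomial in $q(|V|)$. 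Without the hypothesis \eqref{eq:pmcond} this is impossible, since the chain can then have exponentially small conductance at the interface between $M_{N-1}(\Gamma)$ and $M_N(\Gamma)$ and no amount of cleverness in the path construction can rescue the mixing bound.
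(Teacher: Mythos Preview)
Your Markov-chain construction and canonical-paths sketch are essentially correct and are indeed what underlies the sampler that the paper invokes. The paper, however, does not reprove any of this: it quotes the Jerrum--Sinclair sampler (over the set $M_*(\Gamma)$ of \emph{all} matchings, not just $M_{N-1}\cup M_N$) as a black box and concentrates entirely on the sampler-to-counter reduction, which it carries out quite differently from what you propose.

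The paper telescopes over matching \emph{sizes}, not over edges. It writes $Z_N(\Gamma)=Z_1(\Gamma)\prod_{k=1}^{N-1} Z_{k+1}(\Gamma)/Z_k(\Gamma)$ and estimates each ratio by uniformly rescaling \emph{all} edge weights by a single factor $\alpha_k\approx Z_{k-1}(\Gamma)/Z_k(\Gamma)$, so that under the sampler on $\Gamma(\alpha_k)$ both $M_k$ and $M_{k+1}$ have inverse-polynomial mass. The key technical input, which has no analogue in your sketch, is Heilmann--Lieb log-concavity $Z_k^2\ge Z_{k-1}Z_{k+1}$: it forces the ratios $Z_{k-1}/Z_k$ to be monotone in $k$, so the hypothesis $Z_{N-1}/Z_N\le q(|V|)$ together with the trivial bound $Z_0/Z_1=(\sum_e w(e))^{-1}$ pins down every intermediate ratio and hence every $\alpha_k$. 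Because the modified graphs $\Gamma(\alpha_k)$ differ from $\Gamma$ only by a global weight scaling, the sampler's polynomial runtime is preserved automatically.

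Your edge-by-edge self-reducibility has a genuine gap at exactly this point. If you scale (or delete) a single edge weight to form an intermediate graph $\Gamma'$, nothing in the hypothesis guarantees that $\mathrm{NearPerfMatch}(\Gamma')/\mathrm{PerfMatch}(\Gamma')$ is still polynomially bounded---indeed, zeroing a critical edge can send $\mathrm{PerfMatch}(\Gamma')$ to zero. Without that bound your own chain on $M_{N-1}(\Gamma')\cup M_N(\Gamma')$ need not mix rapidly, and the $O(|E|)$ telescoping product cannot be estimated. This is precisely why the paper (following JS89 \S 5) avoids per-edge modifications and instead uses the size-based telescope with log-concavity doing the structural work.
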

Here the $\tilde{O}(\cdot)$ notation hides factors polynomial in $\log(\epsilon^{-1}),\log(w_{max}/w_{min})$ and $\log(|V|)$. A proof of Theorem \ref{thm:countpm} for unweighted graphs is given in Ref.~\cite{JS89}. In the Supplementary Material we adapt the proof (with superficial modifications) to establish Theorem \ref{thm:countpm}. 

Let us informally sketch of the proof of Theorem 1.  First, we use a quantum-to-classical  mapping based on the Trotter-Suzuki expansion to approximate  the quantum partition function $\mathcal{Z}(\beta,H)$ by the matching sum $\mathrm{PerfMatch}(\Gamma)$ for a suitable weighted graph $\Gamma$. The size of this graph scales polynomially in $n$, $\beta$, and $\epsilon^{-1}$. Thus it suffices to show that $\mathrm{PerfMatch}(\Gamma)$ can be approximated efficiently using the algorithm from Theorem \ref{thm:countpm}. Although in general this algorithm is not efficient, weighted graphs obtained via the quantum-to-classical mapping have a very special structure. This structure allows us to relate the ratio of two matching sums in Eq.~\eqref{eq:pmcond} to certain physical properties of the original quantum model. Loosely speaking, the ratio in Eq.~\eqref{eq:pmcond} can be expressed as a sum of imaginary time spin-spin correlation functions.  We obtain a constant upper bound on such correlation functions by examining  truncated versions of the Trotter-Suzuki expansion and using simple linear algebra arguments. Once a polynomial upper bound on the ratio Eq.~\eqref{eq:pmcond} is established,  Theorem \ref{thm:main} is obtained as a simple corollary of Theorem \ref{thm:countpm}.

We now proceed with the proof of Theorem \ref{thm:main}. Let the Hamiltonian $H$, inverse temperature $\beta>0$, and desired relative error $0<\epsilon<1$ be given. The first step is to establish an approximation
\begin{equation}
e^{-\beta H}\approxeq G_J\ldots G_2G_1,
\label{eq:prodZ}
\end{equation}
where each elementary gate $G_t$ is from a gate set $\mathcal{G}$ defined as follows. Define a one-qubit gate
\[
f(t)=\left({\begin{array}{cc} t & 0 \\ 0 & 1 \end{array}}\right)
\]
and two-qubit gates
\begin{equation}
g(t)=\left({\begin{array}{cccc} 1+t^2 & 0 & 0 & t\\ 0 & 1 & 0 & 0\\ 0 & 0 & 1 & 0\\ t & 0 & 0 & 1\end{array}}\right) \;\;\; h(t)=\left({\begin{array}{cccc} 1& 0 & 0 & 0\\ 0 & 1+t^2  & t & 0\\ 0 & t & 1 & 0\\ 0 & 0 & 0 & 1\end{array}}\right).
\label{eq:twoqubitgates}
\end{equation}
Here $t>0$ is a parameter. Define an $n$-qubit gate set 
\begin{equation}
\mathcal{G}=\left\{ f_{i}(2t),g_{ij}(t), h_{ij}(t) \; \bigg| \; i,j\in [n], \; i\neq j,\; 0<t< 1\right\}.
\label{eq:Gndef}
\end{equation}
where the subscripts indicate the qubit(s) on which the gate acts nontrivially.

To obtain Eq.~\eqref{eq:prodZ} we build $e^{-\beta H}$ as a product of exponentials of local terms simply related to the ones appearing in Eq.~\eqref{eq:ham}, \`{a} la Trotter-Suzuki. We use 
\begin{align}
f_i(e^{\pm 2s})&=e^{\pm s(I+Z_i)}\label{eq:fexp1}\\
g_{ij}(2s)&=e^{-s(Y_iY_j-X_i X_j)+O(s^2)}\label{eq:gexp1}\\
h_{ij}(2s)&=e^{-s(-Y_i Y_j-X_i X_j)+O(s^2)}\label{eq:hexp1},
\end{align}
and the fact that, due to Eq.~\eqref{eq:ferro}, each term $-b_{ij}X_iX_j+c_{ij}Y_iY_j$ in Eq.~\eqref{eq:ham} can be written as a linear combination of $Y_iY_j-X_i X_j$ and $-Y_i Y_j-X_i X_j$ with nonnegative coefficients. Using these ideas we establish an approximation with the following properties (the proof is given in the Supplementary Material). Here $H,\beta,\epsilon$ are as above.

\begin{lemma}[Trotter-Suzuki approximation]
We may choose $J=O((1+\beta^2) n^5\epsilon^{-1})$ and a sequence $G_1,G_2,\ldots,G_J\in \mathcal{G}$ with the following properties. There exists a Hermitian $Q$ with
\begin{equation}
G_J\ldots G_2G_1=e^{-\beta H+Q} \qquad  \|Q\|\leq \epsilon/4.
\label{eq:Gprod}
\end{equation}
Furthermore, for any $1\leq i<j\leq J$ there exists a Hermitian $W_{ij}$ and (possibly non-Hermitian) $R_{ij},L_{ij}$ with
\begin{equation}
G_{j} G_{j-1}\ldots G_i=L_{ij} e^{-\beta (j-i+1)J^{-1} H+W_{ij}}R_{ij}
\label{eq:Gpartialprod}
\end{equation}
where $\|W_{ij}\|\leq \epsilon/4$ and $\|R_{ij}\|,\|L_{ij}\|\leq 2$. In addition we have $R_{1j}=I$ for all $j>1$ and $L_{iJ}=I$ for all $i<J$.
\label{lem:pathint}
\end{lemma}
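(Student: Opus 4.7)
I would build $e^{-\beta H}$ as $\mathcal{U}^S$, where $\mathcal{U}=G_K\cdots G_1$ is a single \emph{super-step} consisting of $K=O(n^2)$ gates from $\mathcal{G}$ whose product approximates $e^{-\beta H/S}$, and $J=SK$. Using $|c_{ij}|\leq b_{ij}$, I write
\begin{equation*}
-b_{ij}X_iX_j + c_{ij}Y_iY_j = \alpha_{ij}(Y_iY_j-X_iX_j) + \gamma_{ij}(-Y_iY_j-X_iX_j),
\end{equation*}
with $\alpha_{ij}=(b_{ij}+c_{ij})/2\geq 0$ and $\gamma_{ij}=(b_{ij}-c_{ij})/2\geq 0$, so $H=\sum_{k=1}^K H_k$ decomposes into one- and two-qubit terms of three types. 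Via identities~\eqref{eq:fexp1}--\eqref{eq:hexp1}, each $e^{-\beta H_k/S}$ equals (exactly for $I+Z_i$, up to an $O((\beta/S)^2)$ exponent correction for the two-qubit terms) a specific $f_i$, $g_{ij}$, or $h_{ij}$ gate at parameter $t=O(\beta/S)$ which lies in $(0,1)$ once $S$ is sufficiently large. Ordering these gates in any fixed sequence defines $\mathcal{U}$.

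To bound the exponent error $Q$ in $\mathcal{U}^S = e^{-\beta H+Q}$, I would combine the gate-replacement error $O(K(\beta/S)^2)$ per super-step with the standard first-order Trotter error $O((\beta/S)^2 \sum_{k<k'}\|[H_k,H_{k'}]\|)$, and sum over $S$ super-steps via BCH-type estimates. Since each $H_k$ has support on at most two qubits, $[H_k,H_{k'}]=0$ unless the supports overlap, so the number of contributing pairs is $O(Kn)=O(n^3)$ rather than the naive $K^2=O(n^4)$. This yields $\|Q\|=O(n^3\beta^2/S)$, and choosing $S=\Theta((1+\beta^2)n^3/\epsilon)$ achieves $\|Q\|\leq \epsilon/4$ with total gate count $J=SK=O((1+\beta^2)n^5/\epsilon)$. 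Exploiting disjoint-support commutativity to save the factor of $n$ is the main technical point; everything else is routine Trotter analysis.

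For the partial product~\eqref{eq:Gpartialprod}, given $1\leq i<j\leq J$ I would decompose $G_j\cdots G_i = L_{ij}\,\mathcal{U}^m\,R_{ij}$, where $R_{ij}$ is the tail of the super-step containing $G_i$ (from $G_i$ onward), $L_{ij}$ is the head of the super-step containing $G_j$ (up through $G_j$), and $m$ counts the complete super-steps strictly between. The boundary cases $i=1$ and $j=J$ give $R_{1j}=I$ and $L_{iJ}=I$ automatically, and the degenerate case in which $i,j$ lie in the same super-step is handled by absorbing the whole partial product into $L_{ij}$ with $R_{ij}=I$ and $m=0$. Each of $L_{ij},R_{ij}$ is a product of at most $K$ gates of operator norm $1+O(\beta/S)$, so $\|L_{ij}\|,\|R_{ij}\|\leq e^{O(K\beta/S)}\leq 2$ for the choice of $S$ above. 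The middle factor satisfies $\mathcal{U}^m = e^{-m\beta H/S+Q'}$ with $\|Q'\|\leq \epsilon/4$ by the preceding analysis, and $|m/S-(j-i+1)/J|\leq 2/S$; setting $W_{ij} = Q' + \beta\bigl((j-i+1)/J - m/S\bigr)H$ gives $\|W_{ij}\|\leq \epsilon/4 + O(\beta n^2/S)\leq \epsilon/4$ after a minor inflation of the constant in $S$. The principal obstacle throughout is the polynomial bookkeeping in the Trotter error; the norm bounds on $L_{ij},R_{ij}$ and the matching of the fractional exponent are then straightforward.
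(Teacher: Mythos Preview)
Your overall plan matches the paper's: a super-step of $K=O(n^2)$ gates repeated $S$ times, with boundary factors $L_{ij},R_{ij}$ peeled off for partial products. The decomposition of $H$, the gate-parameter bounds, the norm estimates $\|L_{ij}\|,\|R_{ij}\|\le 2$, and the fractional-exponent matching are all essentially as in the paper. But there is a genuine gap: the lemma requires $Q$ and $W_{ij}$ to be \emph{Hermitian}, and a first-order super-step $\mathcal{U}=G_K\cdots G_1$ cannot deliver this. Each $G_t$ is Hermitian positive, but their product is not Hermitian, so $\mathcal{U}^S$ is not Hermitian and hence cannot equal $e^{-\beta H+Q}$ for any Hermitian $Q$. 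Equivalently, the leading BCH correction $\tfrac12\sum_{k<k'}[H_k,H_{k'}]$ is anti-Hermitian. This is not cosmetic: downstream the paper invokes Weyl's inequality on the spectra of $e^{-\beta H+Q}$ and $e^{-\beta_a H+W}$ to control $\mathcal{Z}_J$ and the singular values $s_k(A),s_k(B)$, and that step fails for non-Hermitian exponents.

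The paper's fix is to take the super-step to be the palindrome $CC^\dagger$ (with $C$ essentially your $\mathcal{U}$), which is automatically positive Hermitian, and to analyze it via a Magnus expansion for a time-symmetric piecewise-constant Hamiltonian $H(t)=H(-t)$. The symmetry forces $\Omega_2=0$, so the commutator error enters only at third order, contributing $O((n^2\beta/r)^3)$ per super-step on top of the $O(n^2(\beta/r)^2)$ gate-replacement error. With this symmetric structure your locality bookkeeping (counting only $O(n^3)$ overlapping pairs to save a factor of $n$) is no longer needed: the crude bound $\|\Omega_k\|\le\pi(\delta L)^k$ with $L=n^2$ already yields $J=O((1+\beta^2)n^5\epsilon^{-1})$.
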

The gate sequence from the Lemma is a simple function of $H$, $\beta$ and $\epsilon$ and in particular is efficiently computable. Using this gate sequence we define
\[
\mathcal{Z}_J\equiv\mathrm{Tr}\left[G_J\ldots G_2G_1\right].
\]
From Weyl's inequality and Eq.~\eqref{eq:Gprod} we get
\begin{equation}
\mathcal{Z}(\beta,H)e^{-\epsilon/4}\leq \mathcal{Z}_J\leq \mathcal{Z}(\beta,H)e^{\epsilon/4}.
\label{eq:Zmapprox}
\end{equation}

In the second step of the proof we will show that 
\begin{equation}
\mathcal{Z}_J=\mathrm{Tr}[G_J G_{J-1}\ldots G_1]=\mathrm{PerfMatch}(\Gamma).
\label{eq:gamma}
\end{equation}
for a graph $\Gamma=(V,E,w)$ with non-negative edge weights. Here $\Gamma$ is related to the sequence $G_1,G_2,\ldots,G_J$ in a simple way described below. Moreover, it satisfies $|V|=O(J)$, $|E|=O(J)$, has maximum edge weight $w_{max}\leq 2$, and satisfies Eq.~\eqref{eq:pmcond} with $q(|V|)=O(|V|^2)$. We can therefore use the randomized approximation scheme from Theorem \ref{thm:countpm}  to compute an estimate $\mathcal{Z}_{\mathrm{est}}$ which, with probability at least $3/4$, approximates $\mathcal{Z}_J$ within relative error $\epsilon/2$. This estimate is computed using runtime
\[
\tilde{O}\left(J^{23}\epsilon^{-2}\right)=\tilde{O}\left(n^{115}(1+\beta^{46})\epsilon^{-25}\right)=\mathrm{poly}(n,\beta,\epsilon^{-1})
\]
Using Eq.~\eqref{eq:Zmapprox} and the fact that $e^{\epsilon/4}(1+\epsilon/2)\leq 1+\epsilon$ and $e^{-\epsilon/4}(1-\epsilon/2)\geq 1-\epsilon$ for all $\epsilon\in (0,1)$, we see that our estimate $\mathcal{Z}_{\mathrm{est}}$ approximates $\mathcal{Z}(\beta,H)$ within relative error $\epsilon$, with probability at least $3/4$. The estimate $\mathcal{Z}_\mathrm{est}$ is the output of our randomized approximation scheme, and the above shows that it satisfies the required error bound and is computed using the claimed polynomial runtime. 

It remains to construct $\Gamma$ and establish the properties stated above. 
\begin{figure}
\centering
\subfloat[Gadget for $f(t)$]{
\begin{tikzpicture}[scale=0.6]
\node at (0,-0.05){};
\draw (0,0.5)--(1,0.5);
\draw node[circle,fill=black, inner sep=1.5pt] at (0,0.5)(){};
\draw node[circle,fill=black, inner sep=1.5pt] at (1,0.5)(){};
\draw[thick,dotted] (-0.7,0.5)--(0,0.5);
\draw[thick,dotted] (1.7,0.5)--(1,0.5);
\draw node at (0.5,1){$t$};

\end{tikzpicture}
}
\hspace{0.5cm}
\subfloat[Gadget for $g(t)$]{
\begin{tikzpicture}[scale=0.6]

\node at (0,-0.1){};
\draw node at (-0.2,0.5){$t$};
\draw node at (1.2,0.5){$t$};
\draw (0,0)--(0,1)--(1,1)--(1,0)--(0,0);
\draw node[circle,fill=black, inner sep=1.5pt] at (0,0)(){};
\draw node[circle,fill=black, inner sep=1.5pt] at (0,1)(){};
\draw node[circle,fill=black, inner sep=1.5pt] at (1,0)(){};
\draw node[circle,fill=black, inner sep=1.5pt] at (1,1)(){};
\draw[thick,dotted] (-0.7,0)--(0,0);
\draw[thick,dotted] (1.7,0)--(1,0);
\draw[thick,dotted] (1,1)--(1.7,1);
\draw[thick,dotted] (0,1)--(-0.7,1);
\end{tikzpicture}
}
\hspace{0.5cm}
\subfloat[Gadget for $h(t)$]{
\begin{tikzpicture}[scale=0.6]

\node at (0,-0.1){};
\draw node at (-0.2,0.5){$t$};
\draw node at (1.2,0.5){$t$};
\draw (0,0)--(0,1)--(1,1)--(1,0)--(0,0);
\draw (0,0)--(-1,0);
\draw (1,0)--(2,0);

\draw node[circle,fill=black, inner sep=1.5pt] at (-1,0)(){};
\draw node[circle,fill=black, inner sep=1.5pt] at (2,0)(){};
\draw node[circle,fill=black, inner sep=1.5pt] at (0,0)(){};
\draw node[circle,fill=black, inner sep=1.5pt] at (0,1)(){};
\draw node[circle,fill=black, inner sep=1.5pt] at (1,0)(){};
\draw node[circle,fill=black, inner sep=1.5pt] at (1,1)(){};

\draw[thick,dotted] (-1.7,1)--(0,1);

\draw[thick,dotted] (2.7,1)--(1,1);

\draw[thick,dotted] (-1.7,0)--(-1,0);
\draw[thick,dotted] (2.7,0)--(2,0);
\end{tikzpicture}
}
\caption{Gadgets for elementary gates\label{fig:gadgets}}
\end{figure}
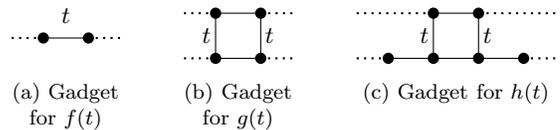
We build $\Gamma$ using a set of gadgets shown in Fig.~\ref{fig:gadgets}. Each gadget is a finite weighted graph along with a subset of distinguished vertices to which \textit{external edges} may be added, as shown by dotted lines in the Figure. Each distinguished vertex is either an input vertex (external edges attach on the left) or an output vertex (on the right). In the Figure, nontrivial edge weights are labeled whereas unlabeled edges are assigned weight $1$.

Consider a gadget with graph $\Theta=(V,E,w)$, input vertices $\{v_1,v_2,\ldots,v_m\}\subseteq V$ and output vertices $\{v_{m+1},v_{m+2},\ldots v_{2m}\}\subseteq V$. For each $x\in \{0,1\}^{2m}$ define $\Theta_x$ to be the induced subgraph obtained from $\Theta$ by removing all vertices $v_i$ such that $x_i=1$. An $m$-qubit gate $G$ associated to the gadget is defined via
\[
\langle x_{2m}\ldots x_{m+2}x_{m+1} |G |x_{m}\ldots x_2 x_1\rangle=\mathrm{PerfMatch}(\Theta_x).
\]
We say the gadget \textit{implements} the gate $G$. It is easily checked that the gadgets shown in Fig.~\ref{fig:gadgets}(a),(b), and (c) implement the gates $f(t), g(t)$, and $h(t)$ respectively.

The above definition composes nicely and allows us to form gadgets which implement the product or trace of a product of gates. Indeed, we can view the gate implemented by a gadget as a tensor with an index for each distinguished vertex. Adding a (weight 1) external edge between two distinguished vertices has the effect of contracting the corresponding indices. For example, if we add external edges which connect the output vertices of a gadget which implements a gate $G_1$ to the input vertices of a gadget which implements another gate $G_2$, we obtain a gadget which implements the product $G_2G_1$. 

We now construct the graph $\Gamma$ satisfying Eq.~\eqref{eq:gamma} (it closely resembles a circuit diagram). Start with a gadget from Fig.~\eqref{fig:gadgets} for each gate in the circuit. Draw the corresponding $J$ disjoint graphs in order $G_1,G_2,\ldots,G_J$ from left to right. Next add a set of weight-$1$ external edges as follows. Each qubit $j=1,2,\ldots,n$ is acted on by some number $1 \leq m_j\leq J$ of gates in the circuit, and is therefore associated with $m_j$ input vertices $\{\mathrm{in}^{(j)}_1,\mathrm{in}^{(j)}_2,\ldots \mathrm{in}^{(j)}_{m_j}\}$ and $m_j$ output vertices $\{\mathrm{out}^{(j)}_1,\mathrm{out}^{(j)}_2,\ldots \mathrm{out}^{(j)}_{m_j}\}$. These appear in alternating order $\mathrm{in}^{(j)}_1,\mathrm{out}^{(j)}_1,\mathrm{in}^{(j)}_2,\mathrm{out}^{(j)}_2,\ldots, \mathrm{in}^{(j)}_{m_j},\mathrm{out}^{(j)}_{m_j}$ from left to right. For each qubit $j\in \{1,2,\ldots,n\}$, we add $m_j$ weight $1$ edges 
\[
\left\{(\mathrm{out}^{(j)}_{m_j},\mathrm{in}^{(j)}_1)  \text{  and  } (\mathrm{out}^{(j)}_k,\mathrm{in}^{(j)}_{k+1}): 1\leq k\leq m_j-1 \right\}.
\]
\begin{figure}
\centering
\begin{tikzpicture}[scale=0.6]
\draw node at (-0.2,0.5){$a$};
\draw node at (1.2,0.5){$a$};
\draw (0,0)--(0,1)--(1,1)--(1,0)--(0,0);
\draw node[circle,fill=black, inner sep=1.5pt] at (0,0)(){};
\draw node[circle,fill=black, inner sep=1.5pt] at (0,1)(){};
\draw node[circle,fill=black, inner sep=1.5pt] at (1,0)(){};
\draw node[circle,fill=black, inner sep=1.5pt] at (1,1)(){};
%%%%%%%%%%%%%%%%%%%%%%

\begin{scope}[shift={(2,-1)}]
\node at (0,-0.1){};
\draw node at (-0.2,0.5){$b$};
\draw node at (1.2,0.5){$b$};
\draw (0,0)--(0,1)--(1,1)--(1,0)--(0,0);
\draw node[circle,fill=black, inner sep=1.5pt] at (0,0)(){};
\draw node[circle,fill=black, inner sep=1.5pt] at (0,1)(){};
\draw node[circle,fill=black, inner sep=1.5pt] at (1,0)(){};
\draw node[circle,fill=black, inner sep=1.5pt] at (1,1)(){};
\end{scope}
%%%%%%%%%%%%%%%%%%%%%%%%%%
\begin{scope}[shift={(5,-1)}]
\draw node at (-0.2,0.5){$c$};
\draw node at (1.2,0.5){$c$};
\draw (0,0)--(0,2)--(1,2)--(1,0)--(0,0);
\draw (0,0)--(-1,0);
\draw (1,0)--(2,0);

\draw node[circle,fill=black, inner sep=1.5pt] at (-1,0)(){};
\draw node[circle,fill=black, inner sep=1.5pt] at (2,0)(){};
\draw node[circle,fill=black, inner sep=1.5pt] at (0,0)(){};
\draw node[circle,fill=black, inner sep=1.5pt] at (0,2)(){};
\draw node[circle,fill=black, inner sep=1.5pt] at (1,0)(){};
\draw node[circle,fill=black, inner sep=1.5pt] at (1,2)(){};
\end{scope}
%%%%%%%%%%%%%%%%%%%%%%%%%%

\draw node[circle,fill=black, inner sep=1.5pt] at (8,0)(){};
\draw node[circle,fill=black, inner sep=1.5pt] at (7,0)(){};
\draw (7,0)--(8,0);
\node at (7.5,0.3){$d$};
%%%%%%%%%%%%%%%%%%%%%%%%%%

\draw (0,0)--(7,0);
\draw (3,-1)--(4,-1);
\draw (0,1)--(5,1);
\draw plot [smooth] coordinates {(0,1) (-0.5,2) (6.5,2) (6,1)};
\draw plot [smooth] coordinates {(0,0) (-1.5,2.5)  (8.5,2.5) (8,0)};
\draw plot [smooth] coordinates {(2,-1) (1.75, -1.75) (7.25,-1.75) (7,-1)};
\end{tikzpicture}
\caption{A graph $\Gamma$ satisfying $\mathrm{PerfMatch}(\Gamma)=\mathrm{Tr}[f_2(d)h_{13}(c)g_{23}(b)g_{12}(a)]$\label{fig:examplegraph}}
\end{figure}
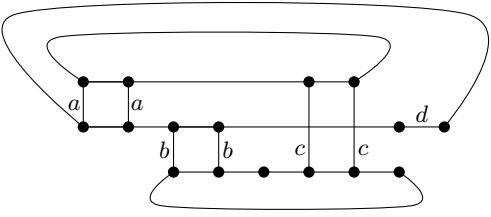
This gives the graph $\Gamma=(V,E,w)$ satisfying Eq.~\eqref{eq:gamma}, see Fig.~\ref{fig:examplegraph} for an example. Now let us verify the properties claimed above. Each gate contributes either $2,4$, or $6$ vertices, so the total number of vertices is $|V|=O(J)$. Each gadget is incident to $\leq 4$ external edges and contains $\leq 6$ internal edges, so the total number of edges is $|E|\leq 10J=O(J)$. Moreover, each external edge has weight $1$ and each edge inside a gadget has weight at most 2 (due to the restriction on $t$ in Eq.~\eqref{eq:Gndef}) , so the maximum edge weight is $w_{max}\leq 2$. To complete the proof, in the remainder of the paper we establish the following bound on $q(|V|)$:
\begin{equation}
\frac{\mathrm{NearPerfMatch}(\Gamma)}{\mathrm{PerfMatch}(\Gamma)}\leq O(|V|^2)=O(J^2).
\label{eq:polybound}
\end{equation}
\label{lem:ratiobound}
For any two distinct vertices $u,v\in V$ let $M_{u,v}(\Gamma)$ be the set of all nearly perfect matchings in which vertices $u$ and $v$ are unmatched. Define 
\[
\Omega_{u,v}(\Gamma)=\sum_{m\in {M_{u,v} (\Gamma)}}\prod_{e\in m} w(e).
\]
Since the total number of pairs $u,v$ of vertices in $\Gamma$ is $O(J^2)$, to prove Eq.~\eqref{eq:polybound} it suffices to show 
\begin{equation}
\frac{\Omega_{u,v}(\Gamma)}{\mathrm{PerfMatch}(\Gamma)}\leq O(1) 
\label{eq:suffcond}
\end{equation}
for all such pairs $u,v$. Instead of considering nearly perfect matchings of $\Gamma$ in which vertices $u,v$ are unmatched, it will be convenient to consider perfect matchings of a modified graph $\Gamma'_{u,v}$ obtained from $\Gamma$ by adding two new \textit{dangling edges}--that is, we add two new vertices labeled $u_0$ and $v_0$ and two new edges $(u,u_0)$ and $(v,v_0)$, each of weight one. Then any perfect matching of $\Gamma'_{u,v}$ contains both of these new edges  and $\Omega_{u,v}(\Gamma)=\mathrm{PerfMatch}(\Gamma'_{u,v})$. 

Let us now understand the effect of adding one or two dangling edges to one of the gadgets from Fig.~\ref{fig:gadgets}. If we attach a dangling edge to the input (resp. output) vertex of the gadget in Fig.~\ref{fig:gadgets}(a) then it is easy to check that the resulting graph gadget implements a gate $|1\rangle\langle0|=f(t)|1\rangle\langle0|$ (resp. $|0\rangle\langle 1|=|0\rangle\langle 1|f(t)$). If we attach dangling edges to both vertices it implements $|0\rangle\langle 0|=|0\rangle\langle1|f(t)|1\rangle\langle 0|$.

There are four possible gadgets obtained by attaching a single dangling edge to the gadget for $g(t)$ from Fig.~\ref{fig:gadgets}(b),  each equivalent to the one shown in Fig.~\ref{fig:gdangling} (a), up to a relabeling of external edges. These four gadgets implement gates
\begin{align}
&g(t)(I\otimes |1\rangle \langle 0|), \quad g(t)(|1\rangle \langle 0|\otimes I), \nonumber\\
&(I\otimes|0\rangle\langle 1|)g(t) ,\quad (|0\rangle\langle 1|\otimes I)g(t).
\label{eq:g1}
\end{align}
There are ${{4}\choose {2}}=6$ gadgets obtained by attaching two dangling edges to the gadget for $g(t)$ from Fig.~\ref{fig:gadgets}(b), each equivalent (up to relabeling of external edges) to one of the gadgets shown in Figs.~\ref{fig:gdangling} (b),(c), or (d). These $6$ gadgets implement gates
\begin{align}
&g(t)|11\rangle \langle 00|,\; (|0\rangle \langle 1|\otimes I)g(t)(|1\rangle \langle 0|\otimes I),\nonumber\\
&(I\otimes |0\rangle \langle 1|)g(t)(|1\rangle \langle 0|\otimes I),\; (|0\rangle \langle 1|\otimes I)g(t)(I\otimes |1\rangle \langle 0|),\nonumber\\
&(I\otimes |0\rangle \langle 1|)g(t)(I\otimes |1\rangle \langle 0|), \; |00 \rangle \langle 11|g(t).
\label{eq:g2}
\end{align}

Finally, consider adding one or two dangling edges to the gadget for $h(t)$ from Fig.~\ref{fig:gadgets}(c). One can again confirm by a direct (although tedious) inspection that the resulting collection of gadgets implement gates of the form
\begin{align}
P_ah(t),& \quad  h(t)P_a \qquad \qquad  \qquad \quad\;\; \text{(one dangling edge)} \nonumber\\
O_b h(t)P_a,& \quad O_bP_a h(t), \quad h(t)O_bP_a \quad \text{(two dangling edges)}\nonumber
\end{align}
where $a,b\in \{1,2\}$ are qubit indices and $O,P$ are single qubit operators from the set $\{|0\rangle \langle 1|,|1\rangle\langle 0|\}$.

In summary, adding a dangling edge to one of the gadgets from Fig.~\ref{fig:gadgets} modifies the gate implemented by the gadget by multiplying (either on the left or right) by an operator $|1\rangle \langle 0|$ or $|0\rangle\langle 1|$ (acting on one of the qubits). Likewise adding two dangling edges has the effect of applying two such multiplications.  Therefore 
\[
\Omega_{u,v}(\Gamma)=\trace{\left( G_J  \cdots G_j O_{b} G_{j-1} \cdots  G_i P_{a} G_{i-1} \cdots G_2 G_1\right)}
\]
for some qubit indices $a,b\in \{1,2,\ldots,n\}$, $1\leq i\leq j\leq J$, and $O,P\in  \{|0\rangle\langle 1|,|1\rangle \langle 0|\}$. Using the fact that all entries of the matrices $G_1,\ldots G_J$ are non-negative we infer
\begin{equation}
\Omega_{u,v}(\Gamma)\leq \trace{(B X_b A X_a)}
\label{eq:omegabound}
\end{equation}
where $X=|0\rangle \langle 1|+|1\rangle\langle 0|$, $A= G_{j-1} \cdots  G_{i+1} G_i$, and $B=G_{i-1} \cdots G_2 G_1G_J G_{J-1} \cdots G_j$.

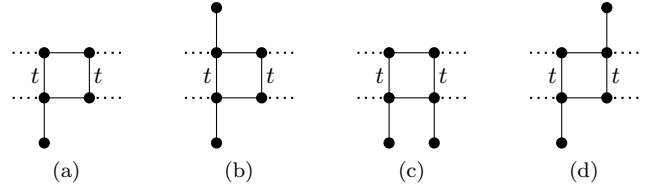
\begin{figure}
\centering
\subfloat[]{
\begin{tikzpicture}[scale=0.6]
\node at (0,-0.1){};
\draw node at (-0.2,0.5){$t$};
\draw node at (1.2,0.5){$t$};
\draw (0,0)--(0,1)--(1,1)--(1,0)--(0,0);
\draw (0,0)--(0,-1);
\draw node[circle,fill=black, inner sep=1.5pt] at (0,-1)(){};
\draw node[circle,fill=black, inner sep=1.5pt] at (0,0)(){};
\draw node[circle,fill=black, inner sep=1.5pt] at (0,1)(){};
\draw node[circle,fill=black, inner sep=1.5pt] at (1,0)(){};
\draw node[circle,fill=black, inner sep=1.5pt] at (1,1)(){};
\draw[thick,dotted] (-0.7,0)--(0,0);
\draw[thick,dotted] (1.7,0)--(1,0);
\draw[thick,dotted] (1,1)--(1.7,1);
\draw[thick,dotted] (0,1)--(-0.7,1);
\end{tikzpicture}
}
\hspace{0.5cm}
\subfloat[]{
\begin{tikzpicture}[scale=0.6]
\node at (0,-0.1){};
\draw node at (-0.2,0.5){$t$};
\draw node at (1.2,0.5){$t$};
\draw (0,0)--(0,1)--(1,1)--(1,0)--(0,0);
\draw (0,0)--(0,-1);
\draw(0,1)--(0,2);
\draw node[circle,fill=black, inner sep=1.5pt] at (0,2)(){};
\draw node[circle,fill=black, inner sep=1.5pt] at (0,-1)(){};
\draw node[circle,fill=black, inner sep=1.5pt] at (0,0)(){};
\draw node[circle,fill=black, inner sep=1.5pt] at (0,1)(){};
\draw node[circle,fill=black, inner sep=1.5pt] at (1,0)(){};
\draw node[circle,fill=black, inner sep=1.5pt] at (1,1)(){};
\draw[thick,dotted] (-0.7,0)--(0,0);
\draw[thick,dotted] (1.7,0)--(1,0);
\draw[thick,dotted] (1,1)--(1.7,1);
\draw[thick,dotted] (0,1)--(-0.7,1);
\end{tikzpicture}
}
\hspace{0.5cm}
\subfloat[]{
\begin{tikzpicture}[scale=0.6]
\node at (0,-0.1){};
\draw node at (-0.2,0.5){$t$};
\draw node at (1.2,0.5){$t$};
\draw (0,0)--(0,1)--(1,1)--(1,0)--(0,0);
\draw (0,0)--(0,-1);
\draw (1,0)--(1,-1);
\draw node[circle,fill=black, inner sep=1.5pt] at (1,-1)(){};
\draw node[circle,fill=black, inner sep=1.5pt] at (0,-1)(){};
\draw node[circle,fill=black, inner sep=1.5pt] at (0,0)(){};
\draw node[circle,fill=black, inner sep=1.5pt] at (0,1)(){};
\draw node[circle,fill=black, inner sep=1.5pt] at (1,0)(){};
\draw node[circle,fill=black, inner sep=1.5pt] at (1,1)(){};
\draw[thick,dotted] (-0.7,0)--(0,0);
\draw[thick,dotted] (1.7,0)--(1,0);
\draw[thick,dotted] (1,1)--(1.7,1);
\draw[thick,dotted] (0,1)--(-0.7,1);
\end{tikzpicture}
}
\hspace{0.5cm}
\subfloat[]{
\begin{tikzpicture}[scale=0.6]
\node at (0,-0.1){};
\draw node at (-0.2,0.5){$t$};
\draw node at (1.2,0.5){$t$};
\draw (0,0)--(0,1)--(1,1)--(1,0)--(0,0);
\draw (0,0)--(0,-1);
\draw (1,1)--(1,2);
\draw node[circle,fill=black, inner sep=1.5pt] at (1,2)(){};
\draw node[circle,fill=black, inner sep=1.5pt] at (0,-1)(){};
\draw node[circle,fill=black, inner sep=1.5pt] at (0,0)(){};
\draw node[circle,fill=black, inner sep=1.5pt] at (0,1)(){};
\draw node[circle,fill=black, inner sep=1.5pt] at (1,0)(){};
\draw node[circle,fill=black, inner sep=1.5pt] at (1,1)(){};
\draw[thick,dotted] (-0.7,0)--(0,0);
\draw[thick,dotted] (1.7,0)--(1,0);
\draw[thick,dotted] (1,1)--(1.7,1);
\draw[thick,dotted] (0,1)--(-0.7,1);
\end{tikzpicture}
}
\caption{Gadgets obtained from Fig~\ref{fig:gadgets}(b) by adding one or two dangling edges.  \label{fig:gdangling}}
\end{figure}
Let $s_k(A)$  be the $k$-th largest singular value of a matrix $A$
and $D=2^n$ be the dimension of the Hilbert space. 
It is known that for any matrices $A,B$  one has 
\begin{equation}
\label{fact}
\max_{U,V} |\trace{(BUAV)}| = \sum_{k=1}^D s_k(A) s_k(B),
\end{equation}
where the maximum is over all unitary matrices $U,V$.
Since $X_a$ and $X_b$ are unitary, Eq.~\eqref{eq:omegabound} implies
\begin{equation}
\Omega_{u,v}(\Gamma) \le \sum_{k=1}^D s_k(A) s_k(B).
\label{eq:near4}
\end{equation}
Define $\beta_a\equiv \beta(j-i)J^{-1}$ and $\beta_b \equiv \beta-\beta_a$. From Eq.~(\ref{eq:Gpartialprod}) one infers that $A=L e^{-\beta_a H + W} R$ and $B=L' e^{-\beta_b H + W'} R'$, where $W,W'$ are hermitian operators with norm at most $\epsilon$ and $L,R,L',R'$
are some operators with norm at most $2$. 
Then 
\[
s_k(A)\le \| L\|\cdot \|R\| \cdot s_k(e^{-\beta_a H + W}) \le O(1) e^{-\beta_a E_k},
\]
where $E_k$ is the $k$-th smallest eigenvalue of $H$. 
By the same argument, $s_k(B)\le O(1)e^{-\beta_b E_k}$. Substituting this into Eq.~\eqref{eq:near4} and using $\beta_a+\beta_b=\beta$ gives
\[
\Omega_{u,v}(\Gamma) \le O(1)\sum_{k=1}^{D} e^{-\beta E_k}=O(1)Z(\beta,H)\leq O(1)\mathcal{Z}_J
\]
where we used Eq.~\eqref{eq:Zmapprox}.  Recalling that $\mathcal{Z}_J=\mathrm{PerfMatch}(\Gamma)$ (cf. Eq.~\eqref{eq:gamma}) we arrive at Eq.~\eqref{eq:suffcond}. This completes the proof of Theorem \ref{thm:main}.

To conclude, we have shown that a large class of stoquastic Hamiltonians with ferromagnetic spin-spin interactions can be simulated classically in polynomial time by Monte Carlo algorithms. An interesting open question is whether recent extensions of Jerrum and Sinclair's techniques Refs.~\cite{mcquillan2013,huang2016} can be used to obtain efficient algorithms for a wider class of Hamiltonians beyond those of XY-type defined in Eq.~(\ref{eq:ham}).

\textit{Acknowledgments}: We acknowledge support from the IBM Research Frontiers Institute.

\bibliography{XYbib}

%%%%%%%%%%%%%%%%%%%%%%%%%%%%%%%%%%%%%%%%%%%%%
%%%%%%%%%%%%%%%%%%%%%%%%%%%%%%%%%%%%%%%%%%%%%%

\onecolumngrid
\appendix
\section{Proof of Theorem \ref{thm:countpm}}
In this Appendix we prove Theorem \ref{thm:countpm}, following Section 5 of Ref.~\cite{JS89} closely with a few small modifications. 

We first introduce some additional notation. Throughout this Section $\Gamma=(V,E,w)$ is a graph with $|V|=2N$ vertices and positive edge weights $w(e)>0$ for all $e\in E$. Recall from the main text that we write $M_k (\Gamma)$ for the set of all matchings of $\Gamma$ containing exactly $k$ edges. We also
define the set of all matchings $M_{*}(\Gamma)=\bigcup_{k=0,1,\ldots,N} M_k(\Gamma)$. Define a positive weight function on matchings
\[
W(\Gamma,M)=\prod_{e\in M} w(e)\qquad \qquad M\in M_{*}(\Gamma)
\]
and weighted sums
\[
Z_k(\Gamma)=\sum_{M\in M_k(\Gamma)}W(\Gamma,M) \qquad \qquad Z(\Gamma)=\sum_{M\in M_*(\Gamma)} W(\Gamma,M).
\]
Here $\mathrm{PerfMatch}(\Gamma)=Z_N(\Gamma)$ and $\mathrm{NearPerfMatch}(\Gamma)=Z_{N-1}(\Gamma)$.  

In the following we say that $X$ approximates $Y$ within ratio $R$ iff 
\[
YR^{-1}\leq X\leq YR.
\]

We use the following theorem which is Corollary 4.3 of Ref.~\cite{JS89}. While Ref.~\cite{JS89} does not explicitly state the runtime bound, it is implicit in the proof.
\begin{theorem}[\textbf{Approximate sampler }\cite{JS89}]
There exists a classical probabilistic algorithm $\mathcal{A}(\Gamma,\epsilon)$ which takes as input a graph $\Gamma=(V,E,w)$ and a precision parameter $\epsilon>0$, and outputs a matching $M\in M_{*}(\Gamma)$ according to a probability distribution $P$. Moreover, for each $M\in M_{*}(\Gamma)$ the probability $P(M)$ approximates $\frac{W(\Gamma,M)}{Z(\Gamma)}$ within ratio $1+\epsilon$. The runtime of the algorithm is 
\begin{equation}
O\left(|E|^3|V|w_{max}^4 \log(w_{max}/w_{min})+|E|^2w_{max}^4 \log(\epsilon^{-1})\right).
\label{eq:sampler}
\end{equation}
\label{thm:samplealg}
\end{theorem}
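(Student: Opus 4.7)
My plan follows the classical Markov chain Monte Carlo paradigm of Jerrum and Sinclair. The idea is to construct a reversible random walk on the state space $M_*(\Gamma)$ whose stationary distribution is $\pi(M) = W(\Gamma,M)/Z(\Gamma)$, bound its mixing time polynomially, and then convert that into the quoted runtime.

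First, I would define the chain as follows. At a state $M \in M_*(\Gamma)$, pick an edge $e=(u,v) \in E$ uniformly at random and propose one of three natural matching moves: (i) if $e \in M$, propose $M' = M \setminus \{e\}$; (ii) if $u$ and $v$ are both unmatched in $M$, propose $M' = M \cup \{e\}$; (iii) if exactly one of $u,v$, say $v$, is unmatched while the other is matched via some edge $f \in M$, propose the rotation $M' = (M \setminus \{f\}) \cup \{e\}$. Accept with Metropolis probability $\min\{1, W(\Gamma,M')/W(\Gamma,M)\}$; otherwise stay at $M$. Detailed balance and irreducibility are immediate (the empty matching is always reachable), so $\pi$ is the unique stationary distribution.

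Second, I would bound the mixing time by the canonical paths method. For each ordered pair of matchings $(I,F)$, define a canonical path $\gamma_{IF}$ from $I$ to $F$ by decomposing the symmetric difference $I \oplus F$ into vertex-disjoint alternating paths and cycles and processing them in a fixed order, ``unwinding'' each component via a sequence of add/remove/rotate moves of total length at most $|V|$. The key quantity is the congestion
\[
\rho(t) \;=\; \frac{1}{\pi(M)\, Q(M,M')}\, \sum_{(I,F):\, t \in \gamma_{IF}} \pi(I)\pi(F)\, |\gamma_{IF}|,
\]
where $t$ ranges over transitions $M \to M'$ of the chain and $Q$ is its transition kernel. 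Poincar\'e's inequality gives the spectral gap $\lambda \geq (\max_t \rho(t))^{-1}$, so a polynomial bound on $\rho$ immediately yields polynomial mixing.

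The main obstacle is proving $\max_t \rho(t) = \mathrm{poly}(|V|,|E|,w_{max})$. I would use the standard encoding trick: for each canonical path through a transition $t=(M\to M')$, the symmetric difference $\eta_{IF}(t) = I \oplus F \oplus (M \cup M')$ is itself a matching-like object (a disjoint union of paths and cycles whose edge count differs from a matching by $O(1)$), and $\eta_{IF}(t)$ together with $M$ and $M'$ uniquely determines $I$ and $F$. Multiplying out $\pi(I)\pi(F) = W(\Gamma,I)W(\Gamma,F)/Z(\Gamma)^2$ and using the multiplicativity $W(\Gamma,I)W(\Gamma,F) \leq w_{max}^{O(1)}\, W(\Gamma,M)\, W(\Gamma,\eta_{IF}(t))$ lets one recognize the sum over encodings as at most $w_{max}^{O(1)} Z(\Gamma)$, which cancels against $\pi(M)=W(\Gamma,M)/Z(\Gamma)$ and the proposal probability $Q(M,M')\geq 1/|E|$ to yield the required polynomial bound. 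Once the spectral gap is in hand, standard arguments show that $T = O(\mathrm{poly}(|V|,|E|,w_{max})\log(\pi_{\min}^{-1}\epsilon^{-1}))$ steps starting from the empty matching produce a pointwise ratio-$(1+\epsilon)$ approximation of $\pi$, with $\log(\pi_{\min}^{-1}) = O(|V|\log(w_{max}/w_{min}))$. Each chain step is $O(|E|)$ work for the acceptance evaluation, and collecting constants reproduces the runtime in Eq.~\eqref{eq:sampler}.
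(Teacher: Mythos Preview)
The paper does not give its own proof of this theorem: it is quoted verbatim as Corollary~4.3 of Jerrum--Sinclair~\cite{JS89}, with the remark that the stated runtime is implicit in their analysis. Your sketch is precisely the Jerrum--Sinclair argument (the Broder/JS Markov chain on all matchings with add/delete/rotate moves, analyzed via canonical paths through the alternating structure of $I\oplus F$ and the injective encoding trick), so at the level of approach there is nothing to compare---you have reconstructed the cited proof rather than an alternative.

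Two small technical points are worth tightening if you want the sketch to actually deliver the specific runtime~\eqref{eq:sampler}. First, the encoding $\eta_{IF}(t)$ must be shown to be an element of $M_*(\Gamma)$ (a genuine matching, not merely ``matching-like''), since the congestion bound relies on $\sum_{(I,F)\ni t} W(\Gamma,\eta_{IF}(t)) \le Z(\Gamma)$; your symmetric-difference formula is morally right but the precise definition in~\cite{JS89} is more delicate, handling the partially-processed component separately. Second, going from a spectral-gap bound to a \emph{pointwise ratio} guarantee (rather than total variation) requires the $L^\infty$ mixing bound for reversible chains, which introduces the factor $\log(1/\pi_{\min})$ you mention; tracking this together with the congestion $\rho=O(|E|\,|V|\,w_{\max}^{4})$ and the per-step cost is what produces the two terms in~\eqref{eq:sampler}. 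Your sketch asserts the final bookkeeping without doing it, but the structure is correct.
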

The following ``log-concavity'' theorem was proven by Heilmann and Lieb in Ref. \cite{HL72} (a different proof for graphs with uniform edge weights is given in Ref. \cite{JS89}).

\begin{theorem}[Theorem 7.1 of Ref.~\cite{HL72}]
\[
Z_k(\Gamma)^2\geq Z_{k-1}(\Gamma)Z_{k+1}(\Gamma).
\]
\end{theorem}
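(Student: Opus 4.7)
The plan is to derive the log-concavity inequality from a stronger \emph{real-rootedness} property of the matching generating polynomial, following the strategy of Heilmann and Lieb. Introduce
\[
\mu_\Gamma(x) = \sum_{k=0}^{N} Z_k(\Gamma)\, x^k .
\]
Because all coefficients are nonnegative and $Z_0(\Gamma)=1$, every real zero of $\mu_\Gamma$ must be strictly negative. The key claim to prove is that all $N$ complex zeros of $\mu_\Gamma$ are in fact real. Once this is established, Newton's inequalities applied to the normalized coefficients $Z_k(\Gamma)/\binom{N}{k}$ yield
\[
\left(\frac{Z_k(\Gamma)}{\binom{N}{k}}\right)^{2} \;\geq\; \frac{Z_{k-1}(\Gamma)}{\binom{N}{k-1}}\cdot\frac{Z_{k+1}(\Gamma)}{\binom{N}{k+1}} ,
\]
and the elementary identity $\binom{N}{k}^2 \geq \binom{N}{k-1}\binom{N}{k+1}$ immediately converts this into the desired $Z_k(\Gamma)^2 \geq Z_{k-1}(\Gamma)\, Z_{k+1}(\Gamma)$. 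As a small edge case, if $Z_k(\Gamma)=0$ for some $k$, then existence of a matching of size $k+1$ would force existence of matchings of every smaller size, so $Z_{k+1}(\Gamma)$ must also vanish and the inequality is trivial.

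To prove real-rootedness, I would induct on the number of edges using the deletion identity obtained by partitioning matchings according to whether they contain a fixed edge $e=(u,v)$:
\[
\mu_\Gamma(x) = \mu_{\Gamma-e}(x) + w(e)\, x\, \mu_{\Gamma-\{u,v\}}(x).
\]
A bare induction on this identity does not close, so the hypothesis needs to be strengthened to the statement that for every vertex $v \in V$, the zeros of $\mu_{\Gamma-v}$ strictly interlace the zeros of $\mu_\Gamma$ along the negative real axis. This interlacing pins down the signs of both summands in the identity at the zeros of $\mu_{\Gamma-e}$; an intermediate-value argument then locates $N$ sign changes of $\mu_\Gamma$ on $(-\infty,0)$, and a parallel accounting re-establishes the interlacing hypothesis at the next recursion depth.

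The main obstacle will be running the interlacing bookkeeping cleanly, because vertex deletion produces graphs of odd order whose generating polynomials must be incorporated into the induction, and repeated zeros complicate the sign-change count. I would sidestep these technicalities by first perturbing the edge weights generically, so that every matching polynomial arising in the recursion has only simple zeros; strict interlacing then propagates unambiguously, and the inequality for arbitrary positive weights follows by continuity of polynomial roots as the perturbation is removed.
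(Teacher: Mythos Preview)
The paper does not supply its own proof of this statement: it is quoted as Theorem~7.1 of Heilmann--Lieb~\cite{HL72}, with a remark that Jerrum--Sinclair~\cite{JS89} give an alternative argument in the unweighted case. So there is nothing in the paper to compare your proposal against beyond the citation itself.

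Your sketch is precisely the Heilmann--Lieb strategy: prove that the matching generating polynomial is real-rooted and then extract log-concavity of the coefficient sequence via Newton's inequalities. In that sense you are reconstructing the cited proof rather than offering a different one. One technical comment: the induction closes more cleanly under the \emph{vertex} recursion
\[
\mu_\Gamma(x)=\mu_{\Gamma-v}(x)+x\sum_{u\sim v} w(u,v)\,\mu_{\Gamma-u-v}(x),
\]
because each term $\mu_{\Gamma-u-v}$ is then a single vertex-deletion of $\Gamma-v$, so your interlacing hypothesis (``$\mu_{G-w}$ interlaces $\mu_G$ for every vertex $w$'') applies directly to both the outer graph and the inner one. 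With the edge-deletion identity you wrote, linking $\mu_{\Gamma-\{u,v\}}$ back to the interlacing data for $\mu_{\Gamma-e}$ requires threading through an intermediate graph of odd order, which is exactly the ``odd-order graphs in the recursion'' issue you flag but do not resolve. The generic-perturbation trick to force simple roots and recover the general case by continuity is standard and fine. The edge case you note (if $Z_k=0$ then $Z_{k+1}=0$) is correct and handles degenerate indices outside the support of the coefficient sequence.
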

As a direct consequence of log-concavity we obtain:
\begin{corol}
\[
\frac{1}{\sum_{e\in E}{w(e)}}=\frac{Z_{0}(\Gamma)}{Z_1(\Gamma)}\leq \frac{Z_{1}(\Gamma)}{Z_{2}(\Gamma)}\leq \ldots \leq  \frac{Z_{N-1}(\Gamma)}{Z_N(\Gamma)}.
\]
\label{cor:monotone}
\end{corol}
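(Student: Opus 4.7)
The plan is to derive this corollary as an immediate consequence of the Heilmann--Lieb log-concavity theorem stated just above it. First I would establish the base-case equality: by definition $Z_0(\Gamma)=W(\Gamma,\emptyset)=\prod_{e\in\emptyset}w(e)=1$ since the only element of $M_0(\Gamma)$ is the empty matching, while $Z_1(\Gamma)=\sum_{e\in E}w(e)$ since the elements of $M_1(\Gamma)$ are precisely the single-edge matchings. This immediately gives $Z_0(\Gamma)/Z_1(\Gamma)=1/\sum_{e\in E}w(e)$.

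Next, for each $1\le k\le N-1$ I would apply log-concavity to obtain $Z_k(\Gamma)^2\ge Z_{k-1}(\Gamma)Z_{k+1}(\Gamma)$, and then divide both sides by the positive quantity $Z_k(\Gamma)Z_{k+1}(\Gamma)$ to rearrange this as
\[
\frac{Z_{k-1}(\Gamma)}{Z_k(\Gamma)}\;\le\;\frac{Z_k(\Gamma)}{Z_{k+1}(\Gamma)}.
\]
Chaining these inequalities as $k$ ranges from $1$ to $N-1$ yields the claimed monotone chain of ratios.

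The only point requiring care is to justify the division, i.e., to verify that $Z_k(\Gamma)>0$ for each $0\le k\le N$. This is where the hypothesis used implicitly throughout the Supplementary Material enters: in the situation to which the corollary will be applied we have $\mathrm{PerfMatch}(\Gamma)=Z_N(\Gamma)>0$, so there exists a perfect matching $M\in M_N(\Gamma)$ with $W(\Gamma,M)>0$. Any $k$-edge subset $M'\subseteq M$ lies in $M_k(\Gamma)$ and satisfies $W(\Gamma,M')>0$ because all edge weights are positive, so $Z_k(\Gamma)>0$ for every $0\le k\le N$. Since the argument is essentially a one-line manipulation of the Heilmann--Lieb inequality, there is no substantive obstacle; the positivity verification is the only subtlety.
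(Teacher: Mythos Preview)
Your proof is correct and matches the paper's approach: the paper simply states that the corollary is a ``direct consequence of log-concavity'' without giving further details, and your argument spells out exactly that derivation. The positivity check you include (using the existence of a perfect matching to guarantee $Z_k(\Gamma)>0$ for all $k$) is a reasonable clarification of a point the paper leaves implicit.
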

We are now going to show that, given $\Gamma$ and an integer $1\leq k\leq N$, we can modify the edge weights such that that the probability of matchings containing $k$ or $k+1$ edges is non-negligible. We shall write $\Gamma(\alpha)$ for the graph obtained from $\Gamma$ by multiplying all edge weights by $\alpha$, i.e.,  $\Gamma(\alpha)=(V,E,w')$ with $w'(e)=w(e)\alpha$. Define
\[
P_k(\alpha)=Z_k(\Gamma(\alpha))/Z(\Gamma(\alpha)).
\]
\begin{lemma}
Let an integer $1\leq k\leq N$ and $\Gamma=(V,E,w)$ be given. Suppose $\frac{Z_{N-1}(\Gamma)}{Z_N(\Gamma)}\leq q(N)$ where $q$ is a polynomial. Furthermore, suppose that $\alpha_k>0$ approximates $\frac{Z_{k-1}(\Gamma)}{Z_{k}(\Gamma)}$ within ratio $1+\frac{\epsilon}{2N}$ for some $\epsilon\in (0,1)$. Then 
\begin{equation}
P_k(\alpha_k)=\Omega(N^{-1}) \qquad \text{and} \qquad P_{k+1}(\alpha_k)=\Omega\left(N^{-1}|E|^{-1}q(N)^{-1}w_{max}^{-1}\right).
\label{eq:pklowerb}
\end{equation}
\label{lem:alphaweighted}
\end{lemma}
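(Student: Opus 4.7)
The plan is to exploit the fact that multiplying every edge weight by $\alpha$ sends $Z_j(\Gamma) \mapsto \alpha^j Z_j(\Gamma)$, so the ratios
\[
R_j(\alpha) \equiv \frac{Z_{j-1}(\Gamma(\alpha))}{Z_j(\Gamma(\alpha))} = \frac{r_j}{\alpha}, \qquad r_j \equiv \frac{Z_{j-1}(\Gamma)}{Z_j(\Gamma)}
\]
scale uniformly. By Corollary \ref{cor:monotone} the sequence $r_j$ is non-decreasing in $j$, hence so is $R_j(\alpha_k)$. The hypothesis that $\alpha_k$ approximates $r_k$ within ratio $1+\eta$, where $\eta\equiv \epsilon/(2N)$, means $R_k(\alpha_k)\in[(1+\eta)^{-1},1+\eta]$. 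Combining this with monotonicity yields $R_j(\alpha_k)\leq 1+\eta$ for $j\leq k$ and $R_j(\alpha_k)\geq (1+\eta)^{-1}$ for $j\geq k+1$. This is exactly the statement that $\alpha_k$ \emph{balances} the problem so that $Z_k(\Gamma(\alpha_k))$ is essentially the largest of the $Z_j(\Gamma(\alpha_k))$.

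Making this quantitative, the inequality $R_j(\alpha_k)\leq 1+\eta$ for $j\le k$ telescopes to $Z_j(\Gamma(\alpha_k))\leq (1+\eta)^{k-j} Z_k(\Gamma(\alpha_k))$, while $R_j(\alpha_k)\geq (1+\eta)^{-1}$ for $j\geq k+1$ rearranges to $Z_j(\Gamma(\alpha_k))\leq (1+\eta) Z_{j-1}(\Gamma(\alpha_k))$ and telescopes to $Z_j(\Gamma(\alpha_k))\leq (1+\eta)^{j-k}Z_k(\Gamma(\alpha_k))$. Summing over $j=0,1,\ldots,N$ and using $(1+\eta)^N\leq e^{\eta N}=e^{\epsilon/2}=O(1)$ gives
\[
Z(\Gamma(\alpha_k)) \leq (N+1)(1+\eta)^N Z_k(\Gamma(\alpha_k)) = O(N)\cdot Z_k(\Gamma(\alpha_k)),
\]
which yields the first bound $P_k(\alpha_k)=\Omega(N^{-1})$.

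For the second bound, observe that
\[
\frac{P_{k+1}(\alpha_k)}{P_k(\alpha_k)} = \frac{Z_{k+1}(\Gamma(\alpha_k))}{Z_k(\Gamma(\alpha_k))} = \frac{1}{R_{k+1}(\alpha_k)} = \frac{\alpha_k}{r_{k+1}}.
\]
A lower bound on $\alpha_k$ comes from the approximation hypothesis together with Corollary \ref{cor:monotone}: $\alpha_k \geq r_k/(1+\eta) \geq r_1/(1+\eta) = [(1+\eta)\sum_e w(e)]^{-1} \geq [(1+\eta)|E| w_{max}]^{-1}$. An upper bound on $r_{k+1}$ comes from monotonicity and the $q(N)$ assumption: $r_{k+1}\leq r_N = Z_{N-1}(\Gamma)/Z_N(\Gamma)\leq q(N)$. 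Combining these with $P_k(\alpha_k)=\Omega(N^{-1})$ gives $P_{k+1}(\alpha_k)=\Omega(N^{-1}|E|^{-1}q(N)^{-1}w_{max}^{-1})$, as claimed.

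The argument is short once the right viewpoint is adopted, and I expect no serious obstacle. The only point requiring care is the bookkeeping of the $(1+\eta)^{|j-k|}$ slack: we must check that both the "increasing" side ($j\leq k$) and the "decreasing" side ($j\geq k+1$) produce the same type of bound, and that $\eta = \epsilon/(2N)$ is small enough to make $(1+\eta)^N$ an absolute constant so that the geometric blowup does not defeat the factor $1/(N+1)$ from simply averaging.
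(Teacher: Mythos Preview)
Your proof is correct and follows essentially the same approach as the paper's. Both arguments use the monotonicity from Corollary~\ref{cor:monotone} together with the approximation hypothesis on $\alpha_k$ to show that $Z_j(\Gamma(\alpha_k))\le (1+\eta)^{|j-k|}Z_k(\Gamma(\alpha_k))$ (equivalently, the paper shows $P_k/P_i\ge (1+\eta)^{-N}\ge 1/2$ directly), and both derive the second bound by combining $\alpha_k\gtrsim r_1$ with $r_{k+1}\le r_N\le q(N)$; the only differences are cosmetic reorderings of the same telescoping computation.
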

\begin{proof}
In the following for ease of notation we write $\alpha=\alpha_k$. First let $i\geq k$ and note that 
\[
\frac{P_{k}(\alpha)}{P_{i}(\alpha)}=\frac{Z_k(\Gamma)}{Z_{i}(\Gamma)}\alpha^{k-i}=\alpha^{k-i}\prod_{j=k}^{i-1}\frac{Z_j(\Gamma)}{Z_{j+1}(\Gamma)}\geq \alpha^{k-i}\left(\frac{Z_{k-1}(\Gamma)}{Z_{k}(\Gamma)}\right)^{i-k}\geq \left(1+\frac{\epsilon}{2N}\right)^{-N}\geq \frac{1}{2}.
\]
where in the third inequality we used Corollary \ref{cor:monotone} and in the last inequality we used the fact that $(1+\epsilon/2N)^N\leq 1+\epsilon\leq 2$. A symmetric argument establishes the same bound in the case $i<k$, i.e.,  $P_{k}(\alpha) \geq P_{i}(\alpha)/2$ for all $i$. Since $\sum_{i=0}^{N}P_i(\alpha)=1$ this implies $P_{k}(\alpha)\geq (2N+2)^{-1}$ which establishes the first claim in Eq.~\eqref{eq:pklowerb}.
We also have
\begin{equation}
P_{k+1}(\alpha)=\alpha \frac{Z_{k+1}(\Gamma)}{Z_k(\Gamma)} P_{k}(\alpha)
\geq \left(1+\frac{\epsilon}{2N}\right)^{-1}\frac{Z_{k-1}(\Gamma)}{Z_{k}(\Gamma)} \frac{Z_{k+1}(\Gamma)}{Z_k(\Gamma)} P_{k}(\alpha)
\geq  \frac{1}{2q(N)\sum_{e\in E} w(e)}P_k(\alpha).
\end{equation}
where in the last inequality we used Corollary \ref{cor:monotone} twice, along with the fact that $\frac{Z_{N-1}(\Gamma)}{Z_N(\Gamma)}\leq q(N)$. Substituting $\sum_{e\in E}w(e)\leq |E|w_{max}$ and $P_k(\alpha)=\Omega(N^{-1})$ gives the second claim in Eq.~\eqref{eq:pklowerb}.
\end{proof}

In the remainder of this Section we describe how the following algorithm, denoted $\mathcal{B}$, can be used to provide the randomized approximation scheme claimed in Theorem \ref{thm:countpm}. The algorithm takes as input a graph $\Gamma$, a positive integer $T$, a polynomial $q$ and a precision parameter $0<\delta<1$.
\begin{algorithm}[H]
\caption{$\mathcal{B}(\Gamma,T,q,\delta)$}
\label{alg:pm}
\begin{algorithmic}[1]
\State $\alpha_1 \gets \left(\sum_{e\in E} w(e)\right)^{-1}$ \Comment{Set $\alpha_1=\frac{Z_{0}(\Gamma)}{Z_{1}(\Gamma)}$}
\State $\Pi \gets \sum_{e\in E} w(e)$.    \Comment{ Set $\Pi=Z_1(\Gamma)$}
\For{\texttt{$k=1$ to $N-1$ }}
	 	\If {$\alpha_k>2q(N)$ or $\alpha_k< \left(2\sum_{e\in E} w(e)\right)^{-1}$} 
		\Return 0 
		\EndIf
		\State {Make $T$ calls to $\mathcal{A}(\Gamma(\alpha_k),\delta)$, resulting in outputs $Y=\{y_1,\ldots,y_T\}\in M_{*}(\Gamma)$.}
		\State {$p_k \gets T^{-1} |Y\cap M_{k}(\Gamma)|$}
		\State {$p_{k+1} \gets T^{-1} |Y\cap M_{k+1}(\Gamma)|$}
		\If {$p_k=0$ or $p_{k+1}=0$} 
		 \Return 0 
		\EndIf
		\State {$\alpha_{k+1} \gets \alpha_k p_k/p_{k+1}$} \Comment{$\alpha_{k+1}$ is our estimate of $\frac{Z_{k}(\Gamma)}{Z_{k+1}(\Gamma)}$}
		\State {$\Pi \gets \Pi/\alpha_{k+1}$} \Comment{$\Pi$ is our estimate of $Z_{k+1}(\Gamma)$}
\EndFor
\State{\Return $\Pi$}
\end{algorithmic}
\end{algorithm}
\begin{theorem}
Let $q$ be a polynomial and let $\epsilon>0$ be given. Let $\Gamma=(V,E,w)$ satisfy $\frac{Z_{N-1}(\Gamma)}{Z_N(\Gamma)}\leq q(N)$. One can choose  
\[
T=\tilde{\Theta}\left(\epsilon^{-2} N^4 |E|^2 w_{max}^2q(N)^2\right) \qquad \text{and}\qquad \delta=\Theta(\epsilon N^{-1})
\]
such that, with probability at least $3/4$,  the output of algorithm $\mathcal{B}(\Gamma,T,q,\delta)$ approximates $Z_N(\Gamma)$ within ratio $1+\epsilon$.
\label{thm:Balg}
\end{theorem}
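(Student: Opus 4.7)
The plan is to prove by induction on $k\in\{1,\ldots,N\}$ that, with high probability over the algorithm's random choices, the variable $\alpha_k$ approximates the true ratio $Z_{k-1}(\Gamma)/Z_k(\Gamma)$ within ratio $1+\epsilon/(2N)$. Given this, the telescoping identity $Z_N(\Gamma)=Z_1(\Gamma)\prod_{k=1}^{N-1}Z_{k+1}(\Gamma)/Z_k(\Gamma)$ combined with the algorithm's update $\Pi\gets\Pi/\alpha_{k+1}$ immediately gives that the returned $\Pi$ approximates $Z_N(\Gamma)$ within ratio $(1+\epsilon/(2N))^{N-1}\le 1+\epsilon$, as required.

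The base case is immediate, since the algorithm initializes $\alpha_1=(\sum_{e\in E}w(e))^{-1}=Z_0(\Gamma)/Z_1(\Gamma)$ exactly. For the inductive step, assume the hypothesis holds for $\alpha_k$. By Corollary~\ref{cor:monotone} and the assumption $Z_{N-1}(\Gamma)/Z_N(\Gamma)\le q(N)$, the true ratio $Z_{k-1}(\Gamma)/Z_k(\Gamma)$ lies in the interval $[(\sum_{e\in E}w(e))^{-1},q(N)]$, so the approximation $\alpha_k$ sits inside $[(2\sum_{e\in E}w(e))^{-1},2q(N)]$ and the algorithm's safety check does not trigger. Lemma~\ref{lem:alphaweighted} then supplies the lower bounds on $P_k(\alpha_k)$ and $P_{k+1}(\alpha_k)$ in Eq.~\eqref{eq:pklowerb}, and Theorem~\ref{thm:samplealg} guarantees that the sampler $\mathcal{A}(\Gamma(\alpha_k),\delta)$ produces a draw from a distribution $P$ approximating $W(\Gamma(\alpha_k),\cdot)/Z(\Gamma(\alpha_k))$ within ratio $1+\delta$; in particular the true probabilities of lying in $M_k(\Gamma)$ and $M_{k+1}(\Gamma)$ approximate $P_k(\alpha_k)$ and $P_{k+1}(\alpha_k)$ within ratio $1+\delta$ respectively.

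A standard Chernoff bound on $T$ i.i.d.\ samples then shows that the empirical frequencies $p_k,p_{k+1}$ deviate from their means by more than an additive $\tau$ with probability at most $2\exp(-\Theta(T\tau^2))$. Choosing $\tau=\Theta(\epsilon P_{\min}/N)$, where $P_{\min}$ is the smaller of the two lower bounds of Eq.~\eqref{eq:pklowerb}, and $\delta=\Theta(\epsilon/N)$, the additive guarantee converts to a multiplicative one at ratio $1+\Theta(\epsilon/N)$ on each of $p_k,p_{k+1}$. Taking $T$ of the order stated in the theorem drives the per-iteration failure probability below $\Theta(1/N)$; a union bound over the $N-1$ iterations then leaves a total success probability of at least $3/4$. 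Conditional on this good event, the update $\alpha_{k+1}=\alpha_k p_k/p_{k+1}$ approximates $\alpha_k P_k(\alpha_k)/P_{k+1}(\alpha_k)=Z_k(\Gamma)/Z_{k+1}(\Gamma)$ within ratio $((1+\delta)(1+\Theta(\epsilon/N)))^2\le 1+\epsilon/(2N)$, closing the induction.

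The main technical obstacle is tracking the coupled propagation of two independent sources of multiplicative error---the sampler's ratio parameter $\delta$ and the statistical deviation of the empirical frequencies---each of which must be forced to $O(\epsilon/N)$ per iteration so that their compounding over the $N-1$ steps fits inside the target ratio $1+\epsilon$. This tight budget is what forces the quadratic dependence of $T$ on $P_{\min}^{-1}$: converting an additive Chernoff guarantee $\tau$ into a multiplicative one of quality $\epsilon/N$ on a quantity as small as $P_{\min}$ requires $\tau\lesssim \epsilon P_{\min}/N$, and hence $T\gtrsim N^2/(\epsilon^2 P_{\min}^2)$, which matches the stated bound. The Chernoff analysis is further complicated by the need to condition on the inductive hypothesis when applying Lemma~\ref{lem:alphaweighted}; the cleanest resolution is to take the union bound across all $2(N-1)$ frequency estimates simultaneously, so that the inductive step is always performed inside the good event.
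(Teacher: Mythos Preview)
Your proposal is correct and follows essentially the same approach as the paper: induction on $k$ to show that $\alpha_k$ approximates $Z_{k-1}(\Gamma)/Z_k(\Gamma)$ within ratio $1+\epsilon/(2N)$, invoking Corollary~\ref{cor:monotone} for the safety check, Lemma~\ref{lem:alphaweighted} for the lower bounds on $P_k(\alpha_k)$ and $P_{k+1}(\alpha_k)$, a Hoeffding/Chernoff concentration bound on the empirical frequencies $p_k,p_{k+1}$, and the telescoping product to finish. The only cosmetic differences are that the paper tracks the success probability via chained conditional bounds $\Pr[\mathcal{E}_{k+1}\mid\mathcal{E}_k]\ge 1-1/(4N^2)$ rather than a single global union bound, and phrases the concentration step as Hoeffding rather than Chernoff.
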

\begin{proof}
 For each $1\leq k\leq N$ let $\mathcal{E}_k$ denote the event that the algorithm $\mathcal{B}(\Gamma,T,q,\delta)$ assigns a value to variable $\alpha_k$ before terminating and that this value approximates $\frac{Z_{k-1}(\Gamma)}{Z_{k}(\Gamma)}$ within ratio $1+\frac{\epsilon}{2N}$.  We shall prove inductively that
\begin{equation}
\mathrm{Pr}\left[\mathcal{E}_k\right]\geq \left(1-\frac{1}{4N^2}\right)^k \qquad \qquad  1\leq k\leq N.
\label{eq:ek}
\end{equation}
The theorem then follows directly from Eq.~\eqref{eq:ek}. Let $X$ denote the output of the algorithm. If events $\mathcal{E}_1,\mathcal{E}_2,\ldots \mathcal{E}_N$ all occur then $X=(\alpha_1\alpha_2\ldots \alpha_N)^{-1}$, and 
\[
\mathrm{Pr}\left[X \text{ approximates $Z_N(\Gamma)$ within ratio $(1+\epsilon/2N)^N$}\right]\geq \left(1-\frac{1}{4N^2}\right)^{\sum_{k=1}^{N} k}.
\]
Noting that $(1+\frac{\epsilon}{2N})^N\leq 1+\epsilon$ and that $\left(1-\frac{1}{4N^2}\right)^{\sum_{k=1}^{N} k}\geq \left(1-\frac{1}{4N^2}\right)^{N^2}\geq 3/4$ then completes the proof. 

It remains to establish Eq.~\eqref{eq:ek}. It holds trivially for $k=1$ since $\alpha_1=\left(\sum_{e\in E}w(e)\right)^{-1}=Z_0(\Gamma)/Z_1(\Gamma)$. For the inductive step let us suppose that Eq.~\eqref{eq:ek} holds for $k$. Then 
\[
\mathrm{Pr}[\mathcal{E}_{k+1}]\geq \mathrm{Pr}[\mathcal{E}_{k+1} | \mathcal{E}_{k}]\mathrm{Pr}[\mathcal{E}_{k}]\geq \left(1-\frac{1}{4N^2}\right)^{k}\mathrm{Pr}[\mathcal{E}_{k+1} | \mathcal{E}_{k}].
\]
To complete the proof it suffices to show that the conditional probability above satisfies 
\begin{equation}
\mathrm{Pr}[\mathcal{E}_{k+1} | \mathcal{E}_{k}]\geq \left(1-\frac{1}{4N^2}\right).
\label{eq:condprob}
\end{equation}
So now suppose that event $\mathcal{E}_{k}$ has occured. Let us examine what happens during the $k$th iteration of the for loop in the algorithm. By our inductive hypothesis, when the algorithm reaches the $k$th iteration of line 3 we have
\begin{equation}
\left(1+\frac{\epsilon}{2N}\right)^{-1} \frac{Z_{k-1}(\Gamma)}{Z_{k}(\Gamma)}\leq \alpha_{k} \leq \left(1+\frac{\epsilon}{2N}\right) \frac{Z_{k-1}(\Gamma)}{Z_{k}(\Gamma)}
\label{eq:induct}
\end{equation}
and since $ \left(\sum_{e\in E} w(e)\right)^{-1} \leq \frac{Z_{k-1}(\Gamma)}{Z_{k}(\Gamma)}\leq q(N)$ (by Corollary \ref{cor:monotone}) we see that the algorithm continues past line 4 without terminating. Let us now consider the values $p_k$ and $p_{k+1}$ which are subsequently assigned in lines 7 and 8. Both of these quantities are averages of i.i.d $0/1$-valued random variables:
\[
p_{k}=\frac{1}{T}\sum_{i=1}^{T} \mathbb{I}_k (y_i) \qquad  p_{k+1}=\frac{1}{T}\sum_{i=1}^{T} \mathbb{I}_{k+1} (y_i)  \qquad \mathbb{I}_j(M)=\begin{cases}1, & M\in M_j(\Gamma)\\0, &\text{otherwise.}\end{cases}
\]
Here each $y_i$ is drawn from the output of $\mathcal{A}(\Gamma(\alpha_k),\delta)$. Applying Theorem \ref{thm:samplealg} we see that 
\begin{equation}
\mathbb{E}[p_j] \text{ approximates } P_j(\alpha_k) \text{ within ratio } (1+\delta) \qquad j=k,k+1.
\label{eq:epj}
\end{equation}
Applying Hoeffding's inequality we get
\begin{equation}
\mathrm{Pr}\left[|p_j-\mathbb{E}[p_j]|\geq \mathbb{E}[p_j] \gamma\right]\leq 2e^{-2T(\mathbb{E}[p_j])^2 \gamma^2}\leq 2e^{-T(P_j(\alpha_k))^2 \gamma^2/2} \qquad j=k,k+1.
\label{eq:hoef}
\end{equation}
where in the last inequality we used the fact that Eq.~\eqref{eq:epj} implies $\mathbb{E}[p_j]\geq P_j(\alpha)/2$. Using Eq.~\eqref{eq:induct} and applying Lemma \ref{lem:alphaweighted} we see that $P_k(\alpha_k)$ and $P_{k+1}(\alpha_k)$ are bounded as in Eq.\eqref{eq:pklowerb}. Thus by choosing 
\[
T=\Theta\left(\frac{\log(N)}{(P_{k+1}(\alpha_k))^2\gamma^2}\right)=\tilde{\Theta}\left(\gamma^{-2} N^2 |E|^2 w_{max}^2q(N)^2\right),
\]
we can ensure that the right-hand side of Eq.~\eqref{eq:hoef} is at most $\frac{1}{8N^2}$ and therefore, with probability at least $1-\frac{1}{4N^2}$ we have:
\begin{equation}
\frac{p_k}{p_{k+1}} \text{ approximates } \frac{\mathbb{E}[p_k]}{\mathbb{E}[p_{k+1}]} \text{ within ratio } \frac{1+\gamma}{1-\gamma}.
\label{eq:pkratio}
\end{equation}
To complete the proof we now show that, for suitably chosen $\gamma,\delta$, Eqs.~(\ref{eq:epj},\ref{eq:pkratio}) together imply that event $\mathcal{E}_{k+1}$ occurs. Since Eq.~(\ref{eq:pkratio}) was shown to hold with probability at least $1-\frac{1}{4N^2}$ this proves Eq.~\eqref{eq:condprob}.

Putting together Eqs.~(\ref{eq:epj},\ref{eq:pkratio}) we get that
\begin{equation}
\frac{p_k}{p_{k+1}} \text{ approximates } \frac{P_k(\alpha_k)}{P_{k+1}(\alpha_k)} \text{ within ratio } R(\gamma,\delta)=\left(\frac{1+\gamma}{1-\gamma}\right)\left(1+\delta\right)^2.
\label{eq:pkratio2}
\end{equation}
Choose $\gamma=\epsilon/(c_1N)$ and $\delta=\epsilon/(c_2N)$ for absolute constants $c_1,c_2>0$ such that $R(\gamma,\delta)\leq 1+\frac{\epsilon}{2N}$. With this choice, and noting that $\alpha_k\frac{P_k(\alpha_k)}{P_{k+1}(\alpha_k)}=Z_{k}(\Gamma)/Z_{k+1}(\Gamma)$, we see that Eq.~\eqref{eq:pkratio2} implies that  $\alpha_{k+1}=\alpha_k p_k/p_{k+1}$ approximates $Z_{k}(\Gamma)/Z_{k+1}(\Gamma)$ within ratio $1+\frac{\epsilon}{2N}$. In other words event $\mathcal{E}_{k+1}$ occurs. This completes the proof.
\end{proof}
Finally we now complete the proof of Theorem \ref{thm:countpm}.
\begin{proof}[Proof of Theorem \ref{thm:countpm}]
Theorem \ref{thm:Balg} states that the output $X$ of the algorithm $\mathcal{B}(\Gamma,T,q,\delta)$ satisfies
\[
(1+\epsilon)^{-1}Z_N(\Gamma)\leq X\leq (1+\epsilon)Z_N(\Gamma)
\]
with probability at least $3/4$. Using the fact that $(1+\epsilon)^{-1}\geq (1-\epsilon)$ we see that the algorithm provides a randomized approximation scheme for $Z_N(\Gamma)=\mathrm{PerfMatch}(\Gamma)$ (provided that $T,\delta$ are chosen as specified in Theorem \ref{thm:Balg}). Now let us upper bound the runtime of the algorithm. Each time the subroutine $\mathcal{A}$ is called its graph argument has edge weights $w(e)\cdot \alpha_k$ for some $k$, which is always upper bounded by $w_{max} (2q(N))$ due to the condition in Line 4 of the algorithm. Using Eq.~\eqref{eq:sampler} with $w_{max}\rightarrow 2w_{max}q(N)$ the runtime of each such call is upper bounded by
\[
\tilde{O}\left(N|E|^3w_{max}^4 q(N)^4\right).
\]
Multiplying this by the maximum total number $NT$ of calls to $\mathcal{A}$ and substituting $|V|=2N$ we obtain the claimed runtime bound from theorem \ref{thm:countpm}.
\end{proof}

%%%%%%%%%%%%%%%%%%%%%%%%%%%%%%%%%%%%%%%%%%%%%
%%%%%%%%%%%%%%%%%%%%%%%%%%%%%%%%%%%%%%%%%%%%%%

\section{Proof of Lemma \ref{lem:pathint}}
In this Appendix we prove Lemma \ref{lem:pathint}. We begin by stating bounds of the form Eqs.~(\ref{eq:gexp1}, \ref{eq:hexp1}). 
\begin{prop}
For $0<t<1$ we have
\begin{align}
g(t)&=e^{-t/2(Y\otimes Y-X\otimes X)+E(t)} \qquad \;\;\|E(t)\|\leq t^2  \label{eq:gexp2}\\
h(t)&=e^{-t/2(-Y\otimes Y-X\otimes X)+F(t)} \qquad \|F(t)\|\leq t^2.
\label{eq:hexp2}
\end{align}
\label{prop:ghbounds}
\end{prop}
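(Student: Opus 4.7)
The plan is to reduce each identity to an explicit $2\times 2$ computation on a two-dimensional invariant subspace. I will focus on $g(t)$; the argument for $h(t)$ is identical after swapping $\{|00\rangle, |11\rangle\}$ with $\{|01\rangle, |10\rangle\}$.

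Both $g(t)$ and $A := -\tfrac{t}{2}(Y\otimes Y - X\otimes X)$ are block diagonal with respect to the decomposition $\mathbb{C}^4 = \mathcal{H}_+ \oplus \mathcal{H}_-$, where $\mathcal{H}_+ = \mathrm{span}(|00\rangle, |11\rangle)$ and $\mathcal{H}_- = \mathrm{span}(|01\rangle, |10\rangle)$. On $\mathcal{H}_-$, $g(t)$ acts as the identity and $A$ vanishes, while using the basis $\{|00\rangle, |11\rangle\}$ to identify $\mathcal{H}_+$ with $\mathbb{C}^2$ one finds
\[
g(t)\big|_{\mathcal{H}_+} = (1+t^2/2)\,I + tX + (t^2/2)Z, \qquad A\big|_{\mathcal{H}_+} = tX.
\]
Since $g(t)|_{\mathcal{H}_+}$ has positive trace $2+t^2$ and determinant $1$, it is positive definite, so $\log g(t)|_{\mathcal{H}_+}$ is well-defined. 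Defining $E(t) := \log g(t) - A$, it vanishes on $\mathcal{H}_-$, so $\|E(t)\| = \|E(t)|_{\mathcal{H}_+}\|$.

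Next I will invoke the Pauli-basis formula for the matrix logarithm: if $M = a_0 I + \vec a\cdot \vec\sigma$ with $a_0 > |\vec a| \geq 0$, then $\log M = \tfrac{1}{2}\log(a_0^2-|\vec a|^2)\,I + \tfrac{1}{2|\vec a|}\log\bigl(\tfrac{a_0+|\vec a|}{a_0-|\vec a|}\bigr)\,\vec a\cdot \vec \sigma$. With $a_0 = 1+t^2/2$ and $\vec a = (t, 0, t^2/2)$, one has $a_0^2 - |\vec a|^2 = 1$, so the identity component drops out. Setting $\mu := 2\sinh^{-1}(t/2)$, which satisfies $\cosh\mu = 1+t^2/2$ and $\sinh\mu = t\sqrt{1+t^2/4} = |\vec a|$, the formula collapses to
\[
\log g(t)|_{\mathcal{H}_+} = \frac{\mu}{\sinh\mu}\Bigl(tX + \tfrac{t^2}{2}Z\Bigr),
\]
so $E(t)|_{\mathcal{H}_+} = \alpha X + \beta Z$ with $\alpha = t(f-1)$, $\beta = (t^2/2)f$, and $f := \mu/\sinh\mu \in (0,1]$. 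Since $X$ and $Z$ anticommute, $\|E(t)|_{\mathcal{H}_+}\| = \sqrt{\alpha^2+\beta^2}$.

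The remaining --- and only nontrivial --- step is verifying $\sqrt{\alpha^2+\beta^2} \leq t^2$ for $t \in (0,1)$. I will use two elementary estimates: first, $\mu \leq t$ from concavity of $\sinh^{-1}$ on $[0,\infty)$, giving $f\leq 1$ and hence $\beta^2 \leq t^4/4$; second, $\sinh\mu - \mu \leq \mu^3/3$ for $\mu \in [0,1]$ (immediate from the Taylor series, since $\mu \leq t < 1$), yielding $1-f = (\sinh\mu-\mu)/\sinh\mu \leq \mu^2/3 \leq t^2/3$ and hence $\alpha^2 \leq t^4/9$. Summing gives $\alpha^2 + \beta^2 \leq (1/9 + 1/4)t^4 < t^4$, as required. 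This analytic inequality is the one place where real work is needed; all preceding steps are routine linear algebra. For $h(t)$ the argument goes through verbatim with $\mathcal{H}_+$ and $\mathcal{H}_-$ exchanged, since $h(t)|_{\mathcal{H}_-}$ and $-\tfrac{t}{2}(-Y\otimes Y-X\otimes X)|_{\mathcal{H}_-}$ have the same matrix representations as $g(t)|_{\mathcal{H}_+}$ and $A|_{\mathcal{H}_+}$ above.
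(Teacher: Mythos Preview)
Your proof is correct and follows essentially the same strategy as the paper: compute $\log g(t)$ exactly and bound its deviation from $-\tfrac{t}{2}(Y\otimes Y-X\otimes X)$. The paper writes the logarithm on the full $4\times 4$ space as
\[
\log g(t)=\tfrac12 R(t)(X\otimes X-Y\otimes Y)+\tfrac{t}{4}R(t)(Z\otimes I+I\otimes Z),\qquad R(t)=\frac{\cosh^{-1}(1+t^2/2)}{\sqrt{1+t^2/4}},
\]
which is exactly your expression once one notes $R(t)=tf$ (since $\cosh^{-1}(1+t^2/2)=\mu$ and $\sqrt{1+t^2/4}=\sinh\mu/t$). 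The paper then proves $|R(t)-t|\le t^3/6$ via Taylor's theorem with an explicit computation of the third derivative of $R$, and finishes with the triangle inequality; for $h(t)$ it invokes the conjugation $h(t)=(I\otimes X)\,g(t)\,(I\otimes X)$. Your route---restricting to the two-dimensional invariant block, using the Pauli-basis logarithm formula, and replacing the Taylor computation by the elementary hyperbolic estimate $\sinh\mu-\mu\le \mu^3/3$---reaches the same bound more economically. The tradeoff is that the paper's formulation keeps the error operator expressed in terms of tensor Paulis on the full space, which dovetails slightly more directly with how $E(t)$ is later absorbed in the Trotter argument, but this is purely cosmetic.
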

We defer the (straightforward) proof of Proposition \ref{prop:ghbounds}. We shall also use the following bound.
\begin{lemma}
Let $H_1,H_2,\ldots, H_L$ be Hermitian operators with $\|H_i\|\leq \delta$ for all $i$, where $0\leq \delta L\leq 1/2$. Define 
\begin{equation}
C=e^{H_L/2}e^{H_{L-1}/2}\ldots e^{H_1/2}.
\label{eq:Qdef}
\end{equation}
Then there exists a Hermitian operator $\Delta$ such that
\[
CC^{\dagger}=e^{H_1+H_2+\ldots+H_L+\Delta} \qquad \text{and} \quad \|\Delta\|\leq 2\pi (\delta L)^3.
\]
\label{lem:Q}
\end{lemma}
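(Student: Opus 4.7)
The approach I would take is based on analytic continuation in a complex parameter together with a palindromic symmetry of the product. Define the entire operator-valued function
\[
F(z) \equiv e^{zH_L/2}e^{zH_{L-1}/2}\cdots e^{zH_1/2}\cdot e^{zH_1/2}e^{zH_2/2}\cdots e^{zH_L/2},
\]
so that $F(1)=CC^{\dagger}$. The product has $2L$ exponential factors whose sequence of exponents is palindromic (the $j$-th factor from the left equals the $j$-th from the right), and so reversing the order of multiplication with $z\mapsto -z$ inverts the product. Concretely, this yields the functional equation $F(-z)=F(z)^{-1}$ for every $z\in\mathbb{C}$.

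For $|z|$ small, $F(z)$ is close to the identity, so the principal logarithm $M(z):=\log F(z)$ is an analytic operator-valued function of $z$. The identity $F(z)F(-z)=I$ implies $M(z)+M(-z)=\log I=0$, i.e.\ $M$ is an odd function. Consequently its Taylor series has only odd-degree coefficients:
\[
M(z)=zM_1+z^3M_3+z^5M_5+\cdots,
\]
and differentiating $F$ once at $z=0$ identifies $M_1$ with $S:=H_1+H_2+\cdots+H_L$. Therefore $\Delta:=M(1)-S=\sum_{k\geq 1}M_{2k+1}$ is Hermitian (since it is the difference of two Hermitian operators on the real axis), and the task reduces to bounding this sum.

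To bound the coefficients I would invoke Cauchy's integral formula on a circle $|z|=R$, obtaining $\|M_k\|\leq R^{-k}\max_{|z|=R}\|M(z)\|$. A telescoping expansion gives the elementary estimate $\|F(z)-I\|\leq e^{R\delta L}-1$, and feeding this into the Neumann series for $\log(I+X)$ yields $\max_{|z|=R}\|M(z)\|\leq -\log\bigl(2-e^{R\delta L}\bigr)$, valid whenever $R\delta L<\log 2$. Choosing $R\delta L$ to equal a universal constant strictly less than $\log 2$ ensures both that the logarithm is defined on the circle and that $R>1$; summing the resulting geometric tail over odd $k\geq 3$ then gives $\|\Delta\|=O((\delta L)^3)$.

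The main obstacle will be extracting the explicit constant $2\pi$ in the stated bound rather than an abstract $O(\cdot)$. The admissible window for $R$ is narrow (above by the singularity of the log expansion at $R\delta L=\log 2$, below by the requirement $R>1$ needed to sum over odd exponents), so squeezing the constants will likely require a sharper estimate of $\max_{|z|=R}\|M(z)\|$ than the crude Neumann-series bound. One natural refinement is to exploit that $M$ has only odd powers by reparametrizing as a function of $w=z^2$ (effectively doubling the decay rate to $R^{-2k}$ in the Cauchy bound), and to use that $M(s)$ is Hermitian on the real axis together with the maximum modulus principle. Once the constants are tuned, the bound $\|\Delta\|\leq 2\pi(\delta L)^3$ follows.
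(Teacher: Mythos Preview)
Your argument is correct in outline and yields the qualitative conclusion $\|\Delta\|=O((\delta L)^3)$, but it takes a genuinely different route from the paper. The paper does not use complex analysis at all: it writes $CC^\dagger$ as the time-ordered solution of $\dot U=H(t)U$ for a piecewise-constant Hamiltonian $H(t)$ on $[-L,L]$ that is \emph{even} in $t$, and then invokes the Magnus expansion $CC^\dagger=\exp[\sum_k\Omega_k]$. The evenness of $H(t)$ forces $\Omega_2=\tfrac12\int\!\!\int[H(t),H(s)]\,ds\,dt=0$, which is exactly the palindromic cancellation you exploit via oddness of $M(z)$. The tail is then controlled by the known Magnus bound $\|\Omega_k\|\le\pi\bigl(\int\|H(t)\|\,dt\bigr)^k\le\pi(\delta L)^k$, and summing over $k\ge 3$ with $\delta L\le 1/2$ gives $2\pi(\delta L)^3$ immediately. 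So the constant $2\pi$ is not the result of any optimization; it simply falls out of a cited estimate on the Magnus terms.

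By contrast, your Cauchy-estimate approach will not recover the constant $2\pi$. With the crude bound $\max_{|z|=R}\|M(z)\|\le -\log(2-e^{R\delta L})$ and the constraint $\tfrac12\ge\delta L$, $R\delta L<\log 2$, the best achievable constant in front of $(\delta L)^3$ is well above $20$; the refinements you sketch (reparametrizing by $z^2$, exploiting Hermiticity on the real axis) do not change this by enough. This is not a defect in the logic---your proof of $\|\Delta\|=O((\delta L)^3)$ is sound and would suffice for every downstream use in the paper, where only the order matters for choosing the Trotter step---but the specific numerical claim in the Lemma is out of reach by your method as stated. What the Magnus route buys is an off-the-shelf termwise bound with a clean constant; what your route buys is self-containment (no appeal to Magnus-expansion literature) at the cost of a worse constant.
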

\begin{proof}
Consider a time dependent Hamiltonian $H(t)$ with $t\in [-L,L]$ defined as follows:
\begin{equation}
\label{Ht}
H(t)=\left\{ \ba{rcl}
H_a/2 &\mbox{if}& a-1\le |t|<a \quad \mbox{for some $a=1,\ldots,L$} \\
0 && \mbox{otherwise.} \\
\ea\right.
\end{equation}
Let $U(t)$ be the solution of a differential equation
\begin{equation}
\label{Ut}
\frac{dU(t)}{dt}=H(t) U(t), \quad  -L\le t\le L.
\end{equation}
We choose initial conditions $U(-L)=I$. Note that $U(L)=CC^{\dagger}$. The Magnus expansion gives
\begin{equation}
\label{M1}
CC^{\dagger}=U(L)=\exp{[\Omega]}, \quad \Omega=\sum_{k=1}^\infty \Omega_k,
\end{equation}
where
\begin{equation}
\label{1st}
\Omega_1=\int_{-L}^L dt \, H(t)=H_1+H_2+\ldots + H_L,
\end{equation}
and
\begin{equation}
\label{2nd}
\Omega_2=\frac12 \int_{-L}^L dt \int_{-L}^t ds\,  [H(t),H(s)].
\end{equation}
The norm of the higher order terms can be bounded as 
\begin{equation}
\label{kth}
\| \Omega_k\| \le \pi \left( \int_{-L}^L \| H(t) \| dt \right)^k \le \pi (\delta L)^k.
\end{equation}
see page 29 of Ref.~\cite{Magnus}. Here in the last inequality we used the bound
$\|H(t)\|\le \delta/2$. Let us choose
\begin{equation}
\label{Vchoice}
\Delta=\Omega-\Omega_1.
\end{equation}
Since $\Omega_1$ and $\Omega$ are 
hermitian, we infer that $\Delta$ is hermitian. 
A direct inspection shows that $\Omega_2=0$. Therefore  Eq.~(\ref{kth}) gives
\begin{equation}
\label{Vnorm}
\|\Delta\| =\| \Omega-\Omega_1-\Omega_2\| \le \sum_{k=3}^\infty \|\Omega_k\|
\le \pi\sum_{k=3}^\infty  (\delta L)^k  \le \pi (\delta L)^3 \sum_{k=0}^\infty 2^{-k} =  2\pi (\delta L)^3.
\end{equation}
\end{proof}
We now use Eqs.~(\ref{eq:fexp1},\ref{eq:gexp2},\ref{eq:hexp2}) and Lemma \ref{lem:Q} to prove Lemma \ref{lem:pathint}.

\begin{proof}[Proof of Lemma \ref{lem:pathint}]
It will be convenient to rewrite the Hamiltonian  Eq.~\eqref{eq:ham} using coefficients $p_{ij}=(b_{ij}-c_{ij})/2$ and  $q_{ij}=(b_{ij}+c_{ij})/2$, i.e., 
\begin{equation}
H=\sum_{1\leq i<j\leq n} p_{ij}(-X_iX_j-Y_i Y_j)+\sum_{1\leq i<j\leq n}q_{ij}(-X_iX_j+Y_i Y_j)+\sum_{i=1}^{n} d_i(I+Z_i).
\label{eq:Hxy}
\end{equation}
Using the fact that $|c_{ij}|<b_{ij}\leq 1$ we see that $p_{ij},q_{ij}\in [0,1]$.

Let $0<\epsilon<1$ and $\beta>0$ be given. Let $r>2\beta $ be a positive integer which we will fix later. Define a rescaled Hamiltonian and rescaled coefficients
\begin{equation}
H'=\frac{\beta}{r} H  \qquad p'_{ij}=\frac{\beta}{r}p_{ij} \qquad q'_{ij}=\frac{\beta}{r}q_{ij}  \qquad d'_{i}= \frac{\beta}{r}d_{i}.
\label{eq:rescale}
\end{equation}
The rescaled coefficients satisfy 
\begin{equation}
0\leq p'_{ij},q'_{ij},|d'_i|\leq \frac{\beta}{r}< \frac{1}{2}.
\label{eq:scaledcoef}
\end{equation}
Consider a product
\begin{align}
C&=\prod_{1\leq i\leq n} f_{i}(e^{-d'_i}) \prod_{1\leq i<j\leq n} g_{ij}(q'_{ij})\prod_{1\leq i<j\leq n} h_{ij}(p'_{ij})\label{eq:productG}\\
&=\prod_{1\leq i\leq n} e^{-d'_i (I+Z_i)/2} \prod_{1\leq i<j\leq n} e^{-q'_{ij}/2(Y_iY_j-X_iX_j)+E_{ij}} \prod_{1\leq i<j\leq n}e^{-p'_{ij}/2(-Y_iY_j-X_iX_j)+F_{ij}}
\label{eq:prodxy}
\end{align}
where in the second line we used Eqs.~(\ref{eq:fexp1},\ref{eq:gexp2},\ref{eq:hexp2}). Here the Hermitian operators $E_{ij},F_{ij}$ satisfy
\begin{equation}
\|E_{ij}\|\leq (q'_{ij})^2\leq \frac{\beta^2}{r^2} \qquad  \|F_{ij}\|\leq (p'_{ij})^2\leq \frac{\beta^2}{r^2}.
\label{eq:efbound}
\end{equation}
The bounds Eq.~\eqref{eq:scaledcoef} and the fact that $e^{-d'_i}\leq e^{1/2}<2$ ensure that  Eq.~\eqref{eq:productG} is a product of $n^2$ gates from the set $\mathcal{G}$ defined in Eq.~\eqref{eq:Gndef}. 
Furthermore, Eq.~\eqref{eq:prodxy} has the form Eq.~\eqref{eq:Qdef} with $L=n^2$, and 
\begin{equation}
\|H_i\|\leq \max_{jk} \bigg\{|2d'_j|, \; 2q'_{jk}+2\|E_{jk}\|, \; 2p'_{jk}+2\|F_{jk}\|\bigg\}\leq \left(2\beta/r+2\beta^2/r^2\right)\leq \frac{3\beta}{r}
\label{eq:hbound}
\end{equation}
where we used Eq.~\eqref{eq:scaledcoef}. Applying Lemma \ref{lem:Q} and using Eqs.~(\ref{eq:Hxy},\ref{eq:rescale}) gives
\[
CC^{\dagger}=\exp\left[-H'+\sum_{1\leq i<j\leq n} (2E_{ij}+2F_{ij})+\Delta\right] \qquad \|\Delta\|\leq 2\pi\left(\frac{3\beta n^2}{r}\right)^3
\]
as long as our choice of $r$ satisfies 
\begin{equation}
6 \beta n^2 r^{-1}\leq 1
\label{eq:rcond}
\end{equation}
(which will be the case, see below). Using Eq.~\eqref{eq:efbound} and the triangle inequality gives
\begin{equation}
CC^{\dagger}=e^{-H'+D} \qquad \|D\|\leq \frac{2n^2\beta^2}{r^2}+2\pi\left(\frac{3\beta n^2}{r}\right)^3,
\label{eq:ccdag}
\end{equation}
and 
\begin{equation}
(CC^{\dagger})^r=e^{-rH'+rD}\equiv e^{-\beta H+Q} \qquad \|Q\|\leq\frac{2n^2\beta^2}{r}+\frac{2\pi3^3 \beta^3 n^6}{r^2}.
\label{eq:qq}
\end{equation}
Since $C$ is a product of $n^2$ gates from $\mathcal{G}$ the left hand side is a product of $J=2n^2r$ such gates. At the end of the proof we will choose $r$ to ensure that $\|Q\|\leq \epsilon/4$. 

Next consider a partial product of the form given on the left hand side of Eq.~\eqref{eq:Gpartialprod}. Since $CC^{\dagger}$ is a product of $2n^2$ gates and $G_JG_{J-1}\ldots G_1=(CC^{\dagger})^r$, we may write
\begin{equation}
G_jG_{j-1}\ldots G_i= L_{ij}(CC^{\dagger})^{K}R_{ij}
\label{eq:Gjtoi}
\end{equation}
where $K\geq 0$ and $R_{ij}$ and $L_{ij}$ are each products of at most $2n^2-1$ gates $\{G_t\}$, and furthermore $R_{1j}=L_{iJ}=I$. Since each gate is of the form $G_t=e^{H_t/2}$ where $H_i$ satisfies Eq.~\eqref{eq:hbound}, we have $\|G_t\|\leq e^{\frac{3\beta}{2r}}$, and thus
\begin{equation}
\|R_{ij}\|,\|L_{ij}\|\leq e^{3\frac{\beta}{2r} (2n^2-1)}\leq e^{1/2}\leq 2
\label{eq:RLbound}
\end{equation}
where we used Eq.~\eqref{eq:rcond}. Moreover, since the left-hand side of Eq.~\eqref{eq:Gjtoi} contains $j-i+1$ gates in total and $R_{ij},L_{ij}$ contain at most $2n^2-1$ gates each, we have
\[
0\leq (j-i+1)-2n^2K\leq 2(2n^2-1),
\]
and therefore
\begin{equation}
\left|K-\frac{(j-i+1)}{2n^2}\right|\leq 2.
\label{eq:kbound}
\end{equation}
Combining Eqs.~(\ref{eq:ccdag},\ref{eq:Gjtoi}) we obtain
\begin{equation}
G_jG_{j-1}\ldots G_i=L_{ij}e^{-KH'+KD}R_{ij},
\label{eq:Gjtoi2}
\end{equation}
where
\begin{equation}
\|KD\|\leq \|rD\|=\|Q\|
\label{eq:NDbound}
\end{equation}
and, using $H'=\beta/r H$ and Eq.~\eqref{eq:kbound},
\begin{equation}
\|KH'- \frac{(j-i+1)\beta}{2n^2r}H\|\leq \frac{2\beta}{r}\|H\|\leq 4n^2\beta/r.
\label{eq:NH}
\end{equation}
Combining Eqs.~(\ref{eq:Gjtoi2},\ref{eq:NDbound},\ref{eq:NH}) and using the fact that $J=2n^2r$ gives Eq.~(\ref{eq:Gpartialprod}) with 
\[
\|W_{ij}\|\leq \frac{4n^2\beta}{r}+\|Q\|\leq \frac{4n^2\beta}{r}+\frac{2n^2\beta^2}{r}+\frac{2\pi3^3 \beta^3 n^6}{r^2},
\]
where in the last inequality we used Eq.~\eqref{eq:qq}. Now choose $r=O(n^3\lceil \beta\rceil^2\epsilon^{-1})$ to make the right hand side at most $\epsilon/4$ and such that Eq.~\eqref{eq:rcond} is also satisfied. This gives Eqs.~(\ref{eq:Gprod},\ref{eq:Gpartialprod}) with $\|Q\|\leq \|W_{ij}\|\leq \epsilon/4$. Noting that $J=2n^2r=O(n^5(\beta^2+1)\epsilon^{-1})$ completes the proof.
\end{proof} 
Finally, we prove Proposition \ref{prop:ghbounds}.
\begin{proof}[Proof of Proposition \ref{prop:ghbounds}]
We have the equality
\begin{equation}
g(t)=\exp{\left[\frac{1}{2}R(t)(X\otimes X-Y\otimes Y)+\frac{t}{4} R(t)\left(Z\otimes I +I\otimes Z\right)\right]}
\label{eq:gexact}
\end{equation}
for all $t>0$, where 
\[
R(t)=\frac{1}{\sqrt{1+t^2/4}}\cosh^{-1}(1+t^2/2)=\frac{1}{\sqrt{1+t^2/4}}\log\left(1+t\sqrt{1+t^2/4}+t^2/2\right).
\]
Using a second order Taylor expansion about $t=0$ one can confirm that
\begin{equation}
|R(t)-t|\leq t^3/6 \qquad 0<t< 1.
\label{eq:taylor}
\end{equation}
Indeed, $R(t)$ has a continuous second derivative on $[0,1]$, and is thrice differentiable on the open interval $(0,1)$. Applying Taylor's theorem we obtain
\begin{equation}
R(t)=t+0\cdot t^2+\mathrm{Error} \qquad \qquad \left|\mathrm{Error}\right|\leq \frac{t^3}{3!}\max_{(0,1)}\left|\frac{d^{3}R}{dt^3}\right| \qquad \qquad 0<t<1.
\label{eq:lagrange}
\end{equation}
Here 
\[
\frac{d^{3}R}{dt^3}=\frac{44t^2-64}{(t^2+4)^3}+\frac{72t-12t^3}{(t^2+4)^{7/2}}\cosh^{-1}(1+t^2/2)  \qquad t>0.
\]
The first term is negative and has magnitude at most $1$ on the interval $(0,1)$ whereas the second term is nonnegative and has magnitude at most $\frac{72}{4^{7/2}}\cosh^{-1}(3/2)=0.54...$ on $(0,1)$. Therefore $\max_{(0,1)}\left|\frac{d^{3}R}{dt^3}\right|\leq 1$ and plugging into Eq.~\eqref{eq:lagrange} gives Eq.~\eqref{eq:taylor}.

From Eq.~\eqref{eq:taylor} and $\|X\otimes X-Y\otimes Y\|=\|\left(Z\otimes I +I\otimes Z\right)\|=2$ we obtain
\[
\left\|\frac{1}{2}(R(t)-t)(X\otimes X-Y\otimes Y)+\frac{t}{4}R(t)\left(Z\otimes I +I\otimes Z\right)\right\|\leq \frac{t^3}{6}+\frac{t}{2}\left(t+\frac{t^3}{6}\right)\leq t^2 \qquad \qquad 0<t<1.
\]
Using this bound in Eq.~\eqref{eq:gexact}  we arrive at Eq.~\eqref{eq:gexp2}.

Finally, from Eq.~\eqref{eq:twoqubitgates} we see that $h(t)=(I\otimes X)g(t)(I\otimes X)$ and therefore Eq.~\eqref{eq:gexp2} implies Eq.~\eqref{eq:hexp2} where $F(t)=(I\otimes X)E(t)(I\otimes X)$.
\end{proof}
\end{document}